\documentclass[11pt,a4paper,oneside]{article}

\usepackage{jheppub}
\usepackage{amsmath,amssymb,amsthm,mathtools,booktabs}
\usepackage{graphicx}
\usepackage{tikz}
\usepackage{svg}
\usepackage[noabbrev]{cleveref}
\usetikzlibrary{arrows.meta,positioning,fit,shapes.geometric}

\DeclareMathOperator{\Cov}{Cov}
\DeclareMathOperator{\diag}{diag}

\newcommand{\dd}{\mathrm{d}}

\newcommand{\KL}{D_{\mathrm{KL}}}
\newcommand{\EEC}{\mathrm{EEC}}

\newcommand{\gev}{\ensuremath{\mathrm{GeV}}}
\newcommand{\pt}{\ensuremath{p_{\mathrm{T}}}}
\newcommand{\avg}[1]{\left\langle #1 \right\rangle}
\newcommand{\conn}[2]{\kappa_{#2}\!\left(#1\right)}
\newcommand{\order}[1]{\mathcal{I}^{(#1)}}

\newtheorem{theorem}{Theorem}
\newtheorem{proposition}[theorem]{Proposition}
\newtheorem{corollary}[theorem]{Corollary}
\theoremstyle{remark}
\newtheorem{remark}{Remark}

\makeatletter
\gdef\@fpheader{}
\makeatother

\begin{document}

\title{From Information Geometry to Jet Substructure: A Triality of Cumulant Tensors, Energy Correlators, and Hypergraphs}
\author[1,3]{Aritra Bal,}
\author[1]{Markus Klute,}
\author[2]{Benedikt Maier,}
\author[3,4]{and Michael Spannowsky}

\affiliation[1]{Institute of Experimental Particle Physics, Karlsruhe Institute of Technology, 76131 Karlsruhe, Germany}
\affiliation[2]{Blackett Laboratory, Imperial College of Science, Technology and Medicine, London, SW7 2AZ, United Kingdom}
\affiliation[3]{Institute for Theoretical Physics, Karlsruhe Institute of Technology, 76131 Karlsruhe, Germany}
\affiliation[4]{Institute for Quantum Materials and Technologies, Karlsruhe Institute of Technology, Karlsruhe 76131, Germany}

\emailAdd{aritra.bal@kit.edu}
\emailAdd{markus.klute@kit.edu}
\emailAdd{benedikt.maier@cern.ch}
\emailAdd{michael.spannowsky@kit.edu}

\abstract{
Pairwise Fisher graphs capture local covariance information, but they cannot distinguish an irreducible multi-observable radiation pattern from a collection of ordinary pairwise correlations. We show that this missing structure is naturally supplied by the higher Fisher tensors. In a finite basis of binned EECs, ECFs, or EFPs, and in the natural exponential-family coordinates generated by that basis, the same local tensor has three equivalent interpretations: it is a coefficient in the local Kullback--Leibler expansion, a connected cumulant of the chosen correlator observables, and a signed weight on a hyperedge linking those observables. This gives an exact Fisher--correlator--hypergraph triality in the local exponential-family embedding.

The triality provides a direct construction of physics-informed hypergraphs from measured or simulated correlator data. Extending the quadratic Fisher matrix to the first non-trivial higher tensor identifies genuinely connected multi-observable radiation patterns, supplies hyperedge weights for higher-order Laplacians and message passing, and gives a principled criterion for compressing observable bases beyond pairwise information. We develop these constructions and spell out why the exact cumulant interpretation is special to natural exponential-family coordinates.

We illustrate the framework in four applications. In a minimal local-KL study, including the cubic Fisher tensor reduces the KL truncation error by about $30\mathrm{x}$ near the reference point and isolates the dominant triplet structure. In a $W\to q\bar q$ versus $t\to bq\bar q$ substructure benchmark, the hypergraph selector improves compressed-basis classification. In a 33-observable basis-design problem, the Fisher hypergraph retains more third-order local response at twelve observables, with $\overline{R}_{(3)}=0.937$ compared with $0.871$ for the pairwise graph. A low-capacity learning benchmark then shows how the same Fisher hyperedges can be used as an interpretable inductive bias for message passing on correlator observables.
}

\maketitle

\section{Introduction}
\label{sec:intro}

Energy correlators provide one of the cleanest ways to describe radiation patterns at colliders. The energy--energy correlation (EEC) measures pairwise energy flow as a function of angular separation, while higher-point energy correlation functions (ECFs) and the energy-flow polynomial (EFP) basis extend the same logic to multi-prong structure \cite{Basham:1978zq,Larkoski:2013eya,Komiske:2017aww,Hofman:2008ar,Moult:2018jzp}. These observables are therefore a natural language for jet substructure, but the way their information is organised matters. A pairwise graph can encode covariances among observables, but it cannot tell whether a triangle of strong pairwise links came from one irreducible three-observable fluctuation or from three unrelated pairwise effects.

Hypergraphs are designed precisely for this kind of multiway structure. Unlike ordinary graphs, whose edges join only pairs of vertices, a hypergraph allows one hyperedge to connect an entire subset of observables, particles, or features at once \cite{Berge:1989,Bretto:2013}. This is why hypergraph methods have become useful in machine learning and collider reconstruction \cite{Zhou:2006,DBLP:journals/corr/abs-1809-09401,Birch-Sykes:2024gij}. What is missing, however, is a principled physics prescription for the hyperedge weights. In collider applications one would like a hyperedge not to be merely an architectural choice, but to correspond to a definite radiation pattern and to a definite contribution to local distinguishability.

Information geometry supplies the relevant hierarchy. The Fisher information matrix is the quadratic coefficient in the local expansion of the Kullback--Leibler divergence, but it is only the rank-two member of a tower of higher Fisher tensors \cite{Rao:1945,Eguchi:1985,Eguchi:1992,Amari:1985,Amari:2000}. The higher-rank tensors encode the first departures from a purely Gaussian, pairwise description of local statistical response. In machine learning, the same Fisher geometry also underlies natural-gradient methods \cite{Amari:1998}; here we use it instead as a bridge between collider observables and higher-order combinatorial structure.

The central observation of this paper is that the bridge becomes exact in a local exponential-family embedding generated by the chosen observable basis. We choose a finite set of binned EECs, ECFs, or EFPs as sufficient statistics and write a local model of the form $p_\theta(x)=h(x)\exp[\theta^a\phi_a(x)-\psi(\theta)]$. This is not a claim that the full space of collider event distributions is globally an exponential family. Rather, it is a local chart around a reference sample, in which the chosen observables generate the deformations being probed. In this natural coordinate system the log-partition function $\psi(\theta)$ is a cumulant-generating function: its derivatives are simultaneously coefficients in the local KL expansion, connected cumulants of the correlator basis, and weights on hyperedges linking the corresponding observables. We call this equivalence the \emph{Fisher--correlator--hypergraph triality}.

The practical consequence is direct. A connected triplet of correlators acquires a statistical meaning because it contributes to local distinguishability, a physical meaning because it represents an irreducible radiation-pattern fluctuation, and an algorithmic meaning because it specifies a hyperedge that should be retained or propagated jointly. The first genuinely new layer begins at rank three, where the cubic Fisher tensor can expose structure that a pairwise Fisher graph necessarily smears into ordinary edges.

Concretely, this paper makes three contributions:
\begin{enumerate}
    \item We derive the higher-order KL expansion in natural exponential-family coordinates and identify the resulting Fisher tensors with connected cumulants of a finite correlator basis.
    \item We use the same tensors to construct signed Fisher hypergraphs, together with the corresponding hypergraph Laplacians and message-passing operators.
    \item We demonstrate the construction in four applications: a minimal local-KL study, a prong-substructure compression benchmark, a higher-order basis-design problem, and a low-capacity physics-informed learning benchmark.
\end{enumerate}
Figure~\ref{fig:triality} summarises this logic and serves as a roadmap for the rest of the paper.

\begin{figure}[t]
    \centering
    \includegraphics[width=0.95\textwidth]{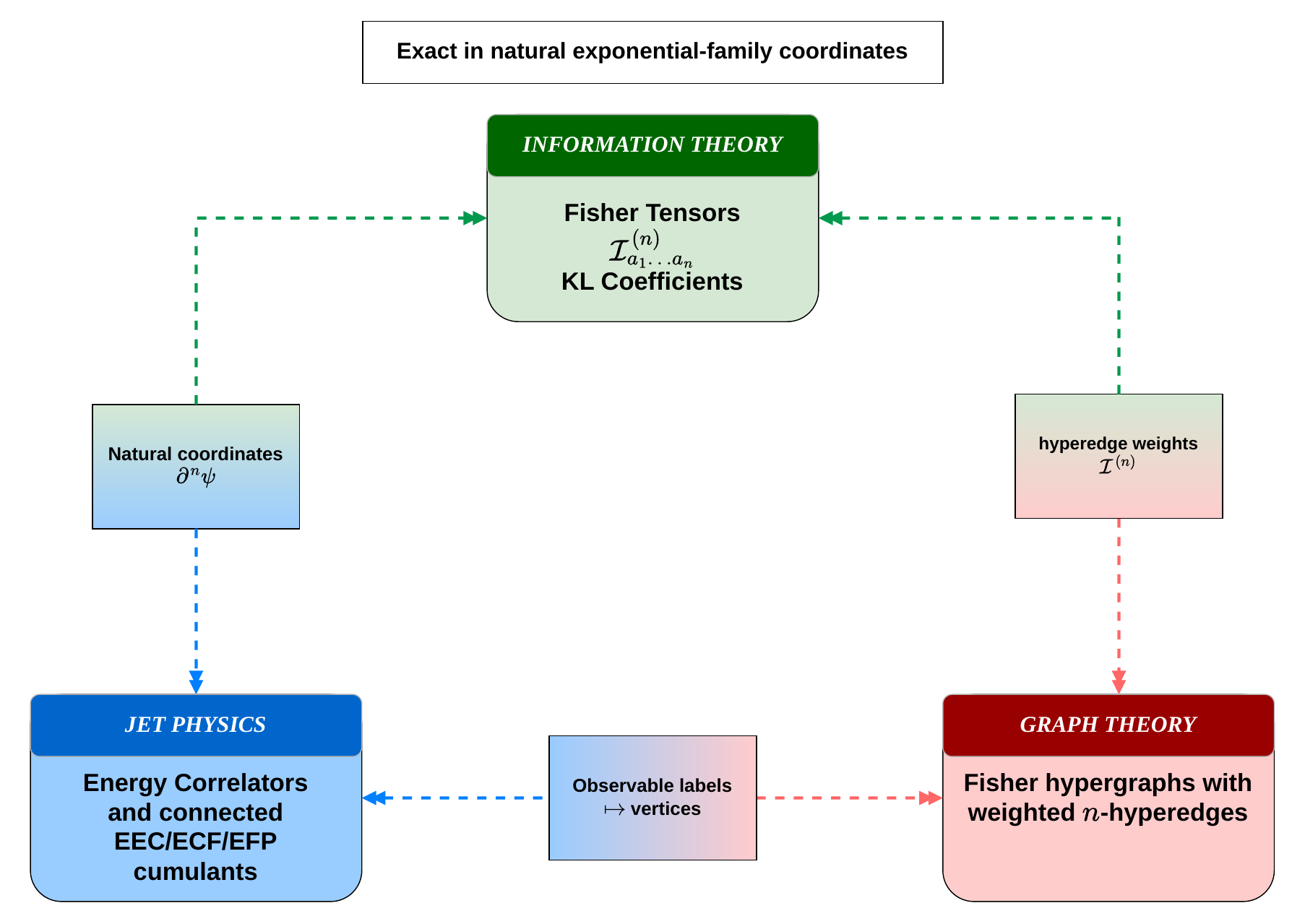}
    \caption{The central triality of the paper. In an exponential-family embedding built from energy correlators, the same tensor is simultaneously a KL coefficient, a connected correlator cumulant, and a hyperedge weight.}
    \label{fig:triality}
\end{figure}

We organise the paper by introducing the three ingredients separately before proving their relation. Section~\ref{sec:fisher} develops the Fisher tensor hierarchy from the KL expansion and states precisely when the cumulant interpretation is exact. Section~\ref{sec:correlators} reviews the EEC, higher-point ECFs, and the EFP basis as the collider observables of interest. Section~\ref{sec:hypergraphs} summarises the weighted hypergraph language, Laplacians, and learning operators that will later inherit physics-motivated weights. Section~\ref{sec:triality} then proves the triality by establishing the three pairwise relations shown in Figure~\ref{fig:triality}. Section~\ref{sec:numerics} gives four concrete applications, beginning with a minimal local-KL study. Appendix~\ref{app:generic} records the generic-coordinate caveat that motivates our use of exponential coordinates.

\section{Fisher Information Tensors}
\label{sec:fisher}

In information geometry, one may expand a smooth statistical divergence around two nearby probability distributions and read off a hierarchy of local tensor fields from its coefficients. For the Kullback--Leibler divergence, the quadratic term yields the Fisher information metric, the cubic term gives the Amari--Chentsov tensor, namely the rank-three member of the Fisher-tensor hierarchy, and the same construction extends to higher symmetric tensors \cite{Rao:1945,Eguchi:1985,Eguchi:1992,Amari:1985,Amari:2000,Ay:2017}. In exponential-family coordinates this hierarchy becomes especially concrete, because the KL coefficients are generated directly by derivatives of the log-partition function. Equivalently, the log-partition function $\psi(\theta)$ is the generating function for the full local hierarchy: its successive derivatives generate the Fisher tensors, just as a cumulant-generating function generates connected moments. We therefore begin by recalling this construction in a form suited to the collider observables used later in the paper.

\subsection{Natural coordinates and cumulant tensors}

Let $x$ denote an event and let $\phi_a(x)$, with $a=1,\ldots,p$, be a chosen observable basis. We consider a model from the exponential family,
\begin{equation}
    p_\theta(x)=h(x)\exp\!\left[\theta^a \phi_a(x)-\psi(\theta)\right],
    \label{eq:exp-family}
\end{equation}
which is one of the standard ways of writing a probability distribution when a chosen set of observables is meant to carry the relevant information. In this form, $h(x)$ is a reference distribution, the $\phi_a(x)$ are the observables or features whose expectation values characterise the event sample, the parameters $\theta^a$ control how strongly the distribution responds to those observables, and $\psi(\theta)$ is the log-normalisation factor that ensures the total probability integrates to one. For readers with a collider background, the main point is simple: once a basis of measured observables has been chosen, the exponential-family form provides a controlled way to describe how the event distribution changes as one tilts the sample towards or away from particular radiation patterns. One may also view the parameters $\theta^a$ as source-like couplings and $\psi(\theta)$ as the associated generating function: differentiating with respect to $\theta$ generates the connected response of the observable basis, and in the present setting those generated objects are precisely the Fisher tensors.

This representation is especially useful here because it makes the link between observables and information geometry explicit. The score components, namely the derivatives of the log-likelihood with respect to the parameters, are
\begin{equation}
    s_a(x;\theta)=\partial_a \log p_\theta(x)=\phi_a(x)-\partial_a \psi(\theta).
    \label{eq:score}
\end{equation}
where it can be shown that $\partial_a \psi(\theta) = \langle \phi_a (x) \rangle_{p_\theta}$, by a single differentiation of the normalisation requirement for Eq.~\eqref{eq:exp-family}. They are therefore just the chosen observables with their mean values subtracted. In other words, the score tells us how unusual a given event looks relative to the average event in the direction of observable $a$. They are centred by construction, $\avg{s_a}_\theta=0$.

The central structural statement is that the full tower of derivatives of $\psi$ is the cumulant tower of the observables, and therefore also the Fisher-tensor hierarchy itself \cite{McCullagh:1987}.

\begin{theorem}[Fisher tensors as cumulants]
\label{thm:cumulants}
For the exponential family in Eq.~\eqref{eq:exp-family},
\begin{equation}
    \partial_{a_1}\cdots \partial_{a_n}\psi(\theta)
    =
    \conn{\phi_{a_1}\cdots \phi_{a_n}}{\theta}
    \equiv
    \order{n}_{a_1\cdots a_n}(\theta),
    \qquad n\ge 1.
    \label{eq:cumulant-derivatives}
\end{equation}
In particular,
\begin{align}
    \order{2}_{ab}(\theta)
    &=\Cov_\theta(\phi_a,\phi_b)
    =\avg{s_a s_b}_\theta,
    \label{eq:fisher-matrix}
    \\
    \order{3}_{abc}(\theta)
    &=\conn{\phi_a \phi_b \phi_c}{\theta}
    =\avg{s_a s_b s_c}_\theta .
    \label{eq:cubic-fisher}
\end{align}
\end{theorem}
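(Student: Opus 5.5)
The plan is to identify $\psi$ as a shifted cumulant-generating function of the observables, and then read off every claim in Theorem~\ref{thm:cumulants} from that identification. Normalisation of Eq.~\eqref{eq:exp-family} gives $\psi(\theta)=\log\int h(x)\,e^{\theta^a\phi_a(x)}\,\dd x$. Hence for a displacement $t$,
\begin{equation*}
\psi(\theta+t)-\psi(\theta)=\log \avg{e^{t^a\phi_a}}_\theta ,
\end{equation*}
because $p_\theta(x)\,e^{t^a\phi_a(x)}=h(x)\,e^{(\theta+t)^a\phi_a(x)-\psi(\theta)}$. The right-hand side is by definition the joint cumulant-generating function $K_\theta(t)$ of $(\phi_1,\dots,\phi_p)$ under $p_\theta$.

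Joint cumulants are defined as the Taylor coefficients of $K_\theta$ at $t=0$:
\begin{equation*}
\conn{\phi_{a_1}\cdots\phi_{a_n}}{\theta}=\partial_{t^{a_1}}\cdots\partial_{t^{a_n}}K_\theta(t)\big|_{t=0}.
\end{equation*}
Since $K_\theta(t)$ differs from $\psi(\theta+t)$ only by a $t$-independent constant, each $t$-derivative at $t=0$ equals the corresponding $\theta$-derivative of $\psi$. This gives Eq.~\eqref{eq:cumulant-derivatives} for all $n\ge1$. The symbol $\order{n}$ is then just a name for this common object.

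The main technical point is justifying the differentiation. I would assume $\theta$ lies in the interior of the natural parameter space $\Theta=\{\theta:\int h\,e^{\theta\cdot\phi}<\infty\}$. Then the moment-generating function is finite on a neighbourhood of $\theta$, hence real-analytic there, and differentiation under the integral is legitimate by dominated convergence. The domination comes from bounding $|\phi|^k e^{\theta\cdot\phi}$ by a sum of exponentials at nearby parameter values. In the collider setting the observables are bounded (binned EECs, ECFs and EFPs of normalised energy fractions), so this holds trivially with $\Theta=\mathbb{R}^p$. I would state this hypothesis explicitly.

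The particular cases are then short computations from the moment–cumulant relations. Differentiating $\log Z$ once gives $\partial_a\psi=\avg{\phi_a}_\theta$. Differentiating $\avg{\phi_a}_\theta=Z^{-1}\int h\,\phi_a e^{\theta\cdot\phi}$ again gives
\begin{equation*}
\partial_b\avg{\phi_a}_\theta=\avg{\phi_a\phi_b}_\theta-\avg{\phi_a}_\theta\avg{\phi_b}_\theta=\Cov_\theta(\phi_a,\phi_b).
\end{equation*}
By Eq.~\eqref{eq:score}, $s_a=\phi_a-\avg{\phi_a}_\theta$, so this covariance equals $\avg{s_as_b}_\theta$, which is Eq.~\eqref{eq:fisher-matrix}.

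For rank three I would use the general rule $\partial_c\avg{F}_\theta=\avg{F\,s_c}_\theta$ for $\theta$-independent $F$. For $F$ depending on $\theta$ through $\avg{\phi}$, the rule picks up extra terms. Applying it to $\avg{s_as_b}_\theta$ gives $\avg{s_as_bs_c}_\theta$ plus terms proportional to $\avg{s_b}_\theta$ and $\avg{s_a}_\theta$, which vanish because scores are centred. This yields Eq.~\eqref{eq:cubic-fisher}, consistent with the standard fact that the third cumulant equals the third central moment. I would add a remark that the identity $\kappa=\avg{s\cdots s}$ fails from $n=4$ onward, where disconnected products of covariances appear.
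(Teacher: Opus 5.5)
Your proposal is correct and follows essentially the same route as the paper. Both identify $\psi(\theta+t)-\psi(\theta)$ with the cumulant-generating function $\log\avg{e^{t^a\phi_a}}_\theta$ and read off the cumulants as its Taylor coefficients (you by taking $t$-derivatives at $t=0$, the paper by matching coefficients), and both reduce the rank-two and rank-three cases to central moments of the centred scores. Two things you add are useful: the explicit regularity hypothesis (interior of the natural parameter space, which holds automatically for bounded correlator observables), and the direct check $\partial_c\avg{s_as_b}_\theta=\avg{s_as_bs_c}_\theta$; the paper leaves both implicit.
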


\begin{proof}
Normalization implies
\begin{equation}
    \psi(\theta+t)-\psi(\theta)
    =
    \log \avg{\exp(t^a \phi_a)}_\theta .
\end{equation}
Expanding the right-hand side as a multivariate cumulant generating function gives \cite{BarndorffNielsen:1978}
\begin{equation}
    \psi(\theta+t)-\psi(\theta)
    =
    \sum_{n=1}^{\infty}\frac{1}{n!}\,
    \conn{\phi_{a_1}\cdots \phi_{a_n}}{\theta}\,
    t^{a_1}\cdots t^{a_n}.
\end{equation}
Matching coefficients of $t^{a_1}\cdots t^{a_n}$ proves Eq.~\eqref{eq:cumulant-derivatives}. The identities in Eqs.~\eqref{eq:fisher-matrix} and \eqref{eq:cubic-fisher} follow from $s_a=\phi_a-\avg{\phi_a}_\theta$.
\end{proof}

The rank-two tensor $\order{2}$ is the ordinary Fisher information matrix. For $n\ge 3$, $\order{n}$ is not a metric tensor, but it still carries invariant local information about how the model departs from the Gaussian approximation around $\theta$. In natural exponential coordinates, the cubic tensor $\order{3}$ coincides with the Amari--Chentsov tensor, that is, with the rank-three Fisher tensor \cite{Chentsov:1982,Amari:1985,Amari:2000}.

\begin{corollary}[Exact KL expansion]
\label{cor:kl}
For the same exponential family, let $\delta$ denote a small parameter displacement so that $p_{\theta+\delta}$ is a nearby distribution, let $\KL(p_\theta\Vert p_{\theta+\delta})$ be the Kullback--Leibler divergence from $p_\theta$ to $p_{\theta+\delta}$, and let $\order{n}_{a_1\cdots a_n}(\theta)=\partial_{a_1}\cdots\partial_{a_n}\psi(\theta)$ be the rank-$n$ Fisher tensor at $\theta$. Then
\begin{equation}
    \KL\!\left(p_\theta \Vert p_{\theta+\delta}\right)
    =
    \sum_{n=2}^{\infty}\frac{1}{n!}\,
    \order{n}_{a_1\cdots a_n}(\theta)\,
    \delta^{a_1}\cdots \delta^{a_n}.
    \label{eq:exact-kl}
\end{equation}
\end{corollary}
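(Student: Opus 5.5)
The plan is to compute the divergence in closed form first and only then expand it. Write $\log p_\theta(x)-\log p_{\theta+\delta}(x)=-\delta^a\phi_a(x)+\psi(\theta+\delta)-\psi(\theta)$. The base measure $h(x)$ cancels, so the log-ratio is affine in the observables. Taking the expectation under $p_\theta$ and using $\avg{\phi_a}_\theta=\partial_a\psi(\theta)$, established right after Eq.~\eqref{eq:score}, I obtain the Bregman form
\begin{equation*}
    \KL\!\left(p_\theta\Vert p_{\theta+\delta}\right)=\psi(\theta+\delta)-\psi(\theta)-\delta^a\,\partial_a\psi(\theta).
\end{equation*}
This step is exact and needs nothing beyond integrability of $\phi_a$ under $p_\theta$.

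Next I would insert the expansion from the proof of Theorem~\ref{thm:cumulants}, namely $\psi(\theta+\delta)-\psi(\theta)=\sum_{n\ge1}\frac{1}{n!}\conn{\phi_{a_1}\cdots\phi_{a_n}}{\theta}\,\delta^{a_1}\cdots\delta^{a_n}$. By Eq.~\eqref{eq:cumulant-derivatives}, each coefficient is $\order{n}_{a_1\cdots a_n}(\theta)$. The $n=1$ term is $\order{1}_a\delta^a=\delta^a\partial_a\psi(\theta)$, which cancels exactly against the subtracted linear term. What remains is the sum over $n\ge 2$, which is Eq.~\eqref{eq:exact-kl}. Equivalently, this is just the Taylor series of $\psi$ at $\theta$ with the constant and linear pieces removed, and Theorem~\ref{thm:cumulants} identifies the Taylor coefficients with the Fisher tensors.

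The main obstacle is the word ``exact'': we need the series to converge to $\psi(\theta+\delta)$ rather than be merely asymptotic. I would handle this by assuming $\theta$ lies in the interior of the natural parameter space $\Theta=\{\theta:\int h\,e^{\theta^a\phi_a}<\infty\}$. On that interior, $\psi$ is real-analytic, because the moment generating function $\avg{\exp(t^a\phi_a)}_\theta$ is finite in a neighbourhood of $t=0$ and extends holomorphically to a complex polydisc. Consequently its logarithm, the cumulant generating function, has a convergent power series for $|\delta|$ below some radius $r(\theta)>0$. For such $\delta$ the identity holds with a convergent sum, and the expression ``small displacement'' in the corollary should be read as $|\delta|<r(\theta)$. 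Outside this radius only the closed Bregman form remains valid. If one prefers to avoid analyticity, one can instead state the result as Taylor's theorem with remainder to any finite order $N$, which suffices for the truncation studies later in the paper.
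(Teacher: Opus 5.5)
Your proposal is correct and follows the paper's proof: you reduce the divergence to the Bregman form $\psi(\theta+\delta)-\psi(\theta)-\delta^a\partial_a\psi(\theta)$, Taylor-expand $\psi$, and cancel the linear term. Your additional argument that $\psi$ is real-analytic on the interior of the natural parameter space, so the series genuinely converges for $|\delta|<r(\theta)$, makes precise the word ``exact'', which the paper's proof leaves implicit.
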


\begin{proof}
Using Eq.~\eqref{eq:exp-family},
\begin{equation}
    \KL\!\left(p_\theta \Vert p_{\theta+\delta}\right)
    =
    \psi(\theta+\delta)-\psi(\theta)-\delta^a \partial_a \psi(\theta).
\end{equation}
Expanding $\psi(\theta+\delta)$ about $\theta$ and canceling the linear term gives Eq.~\eqref{eq:exact-kl}.
\end{proof}

Equation~\eqref{eq:exact-kl} is the precise sense in which the higher Fisher tensors control local distinguishability. The quadratic truncation is the familiar Fisher metric approximation. The cubic term is the first correction that can distinguish a skew local response from a purely Gaussian one.

\begin{remark}[Why we insist on exponential coordinates]
\label{rem:coordinates}
For a generic parametric family, the cubic coefficient of the KL expansion is not simply $\avg{s_a s_b s_c}$. The exact identity mixes third score moments with expectations of second derivatives of the log-likelihood. Appendix~\ref{app:generic} records that formula explicitly. The exponential-family embedding makes the higher-order KL coefficients equal to connected observable cumulants without residue terms.
\end{remark}

\section{Energy Correlators}
\label{sec:correlators}

Energy correlators are infrared- and collinear-safe observables that characterise how the energy carried by an event is distributed over angles, and for that reason they provide a direct probe of QCD radiation patterns. The energy-energy correlator (EEC) is the basic two-point example: it measures pairwise angular energy flow, has a long history as a precision observable in $e^+e^-$ annihilation, and has also become central in conformal collider theory and in modern perturbative studies of event structure \cite{Basham:1978zq,Hofman:2008ar,Moult:2018jzp,Dixon:2019uzg}. Higher-point energy correlation functions (ECFs) extend the same idea to genuinely multiparticle patterns and have become standard tools for resolving jet substructure, especially when multi-prong structure is the discriminating feature \cite{Larkoski:2013eya,Chen:2019bpb,Thaler:2010tr,Larkoski:2017jix}. Energy flow polynomials (EFPs) organise this energy-flow information into a systematic linear basis for infrared- and collinear-safe observables, making it possible to treat a broad class of collider measurements within one common language \cite{Komiske:2017aww}. Related geometric perspectives on the full space of collider events have also been developed \cite{Komiske:2019fks}. In the present work these observables provide the physically motivated basis from which the Fisher tensors and the associated hypergraph weights will be constructed.

\subsection{EECs and ECFs}

For an event with constituents $i=1,\ldots,N$, let
\begin{equation}
    z_i=\frac{E_i}{\sum_j E_j}
    \qquad \text{or} \qquad
    z_i=\frac{p_{T,i}}{\sum_j p_{T,j}},
\end{equation}
depending on whether the observable is defined in $e^+e^-$ or hadron-collider kinematics. The $n$-point energy correlation function with angular exponent $\beta>0$ is
\begin{equation}
    e_n^{(\beta)}(x)=
    \sum_{1\le i_1<\cdots<i_n\le N}
    \left(\prod_{r=1}^{n} z_{i_r}\right)
    \left(\prod_{r<s}\theta_{i_r i_s}^{\beta}\right),
    \label{eq:ecf}
\end{equation}
where $\theta_{ij}$ denotes an opening angle or $\Delta R_{ij}$.

The eventwise EEC is the differential pairwise correlator
\begin{equation}
    \EEC_x(\chi)=
    \sum_{i<j} 2 z_i z_j \,
    \delta\!\big(\cos\chi-\cos\chi_{ij}\big).
    \label{eq:eec}
\end{equation}
The factor of two together with the ordered sum $\sum_{i<j}$ is equivalent to the unordered convention $\sum_{i\neq j} z_i z_j\,\delta(\cos\chi-\cos\chi_{ij})$, and is chosen here so that the kernel relation in Proposition~\ref{prop:eec-ecf} carries the conventional $\tfrac12$ prefactor. This relation is relevant for what follows because it puts the EEC and the ECFs on the same footing. The EEC resolves pairwise energy flow differentially in the angular variable $\chi$, whereas $e_2^{(\beta)}$ is an integrated observable obtained by weighting that same distribution with a kernel. Consequently, a basis built from EEC bins and a basis built from two-point ECFs probes the same underlying pairwise radiation pattern at different levels of resolution. This is the reason one can later reinterpret cumulants of ECF observables as kernel-smeared cumulants of the EEC itself. The simplest instance of this statement is the following identity.

\begin{proposition}[The two-point ECF as an EEC moment]
\label{prop:eec-ecf}
For $e^+e^-$ kinematics,
\begin{equation}
    e_2^{(\beta)}(x)
    =
    \frac12 \int_{-1}^{1}\dd(\cos\chi)\,
    \Big(2\sin\frac{\chi}{2}\Big)^\beta
    \EEC_x(\chi).
    \label{eq:eec-to-ecf}
\end{equation}
\end{proposition}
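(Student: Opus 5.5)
The plan is to substitute the definition of $\EEC_x(\chi)$ from Eq.~\eqref{eq:eec} into the right-hand side of Eq.~\eqref{eq:eec-to-ecf}. Then exchange the finite sum over pairs $i<j$ with the integral over $\cos\chi$, and collapse each integral with the delta function. Since the sum is finite and each term is a distribution acting on a continuous test function, the exchange is immediate. No limiting argument is needed.

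Concretely, the right-hand side becomes
\begin{equation*}
    \frac12\sum_{i<j} 2 z_i z_j \int_{-1}^{1}\dd(\cos\chi)\,\Big(2\sin\frac{\chi}{2}\Big)^{\beta}\,\delta\!\big(\cos\chi-\cos\chi_{ij}\big).
\end{equation*}
We treat $u=\cos\chi$ as the integration variable, with $\chi=\arccos u\in[0,\pi]$. The kernel $(2\sin(\chi/2))^\beta = (2(1-u))^{\beta/2}$ is then a continuous function of $u$ on $[-1,1]$ for $\beta>0$. The delta function therefore evaluates it at $u=\cos\chi_{ij}$. This gives $\sum_{i<j} z_i z_j\,(2\sin(\chi_{ij}/2))^\beta$.

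The remaining step is to identify $\theta_{ij}$ in Eq.~\eqref{eq:ecf} for $e^+e^-$ kinematics. For $n=2$ the ECF is $e_2^{(\beta)}=\sum_{i<j} z_i z_j\,\theta_{ij}^\beta$. The proposition therefore holds exactly when the angular measure is $\theta_{ij}=2\sin(\chi_{ij}/2)=\sqrt{2(1-\cos\chi_{ij})}$. This is the chord length between the unit direction vectors $\hat n_i$ and $\hat n_j$, since $|\hat n_i-\hat n_j|^2 = 2-2\cos\chi_{ij}$. I would state this explicitly as the $e^+e^-$ convention: the standard choice $\theta_{ij}^2 = 2\,p_i\cdot p_j/(E_iE_j)$ for massless particles reduces to exactly this.

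The main (mild) obstacle is this convention issue rather than any computation. If $\theta_{ij}$ were read literally as the opening angle $\chi_{ij}$, the identity would hold only at small angles, where $2\sin(\chi/2)\approx\chi$. The proof should therefore fix $\theta_{ij}$ as the chordal distance, noting it agrees with the opening angle in the collinear limit. A secondary check concerns the endpoints $\cos\chi_{ij}=\pm1$: collinear or back-to-back pairs put the delta function on the boundary of the integration range. I would handle this by adopting the convention that the delta function integrates to one on the closed interval. At $u=1$ there is no issue in any case, because the kernel vanishes there for $\beta>0$.
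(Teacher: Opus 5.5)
Your proposal is correct and follows the paper's proof: substitute the EEC definition, let the delta function localise each term at $\cos\chi_{ij}$ so that the prefactor $\tfrac12\cdot 2$ leaves $\sum_{i<j} z_i z_j\,(2\sin(\chi_{ij}/2))^\beta$, and identify this with $e_2^{(\beta)}$ at $n=2$. Your explicit choice of $\theta_{ij}$ as the chordal distance $\sqrt{2p_i\cdot p_j/(E_iE_j)}$ spells out what the paper calls the ``spherical definition'' of the ECF, and your endpoint remark is a harmless extra.
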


\begin{proof}
Substituting Eq.~\eqref{eq:eec} into the right-hand side of Eq.~\eqref{eq:eec-to-ecf} gives
\begin{equation}
    \frac12 \sum_{i<j} 2 z_i z_j
    \int_{-1}^{1}\dd(\cos\chi)\,
    \Big(2\sin\frac{\chi}{2}\Big)^\beta
    \delta\!\big(\cos\chi-\cos\chi_{ij}\big).
\end{equation}
The delta function localises the integral at the pairwise angle $\chi_{ij}$, so the expression reduces to
\begin{equation}
    \sum_{i<j} z_i z_j \Big(2\sin \frac{\chi_{ij}}{2}\Big)^\beta,
\end{equation}
which is precisely the spherical definition of $e_2^{(\beta)}$ in Eq.~\eqref{eq:ecf} for $n=2$.
\end{proof}

The proposition makes precise the sense in which the two-point ECF is not an unrelated observable but a coarse-grained version of the EEC. This point is used explicitly in Section~\ref{sec:triality}, where observables that are linear functionals of the EEC inherit an interpretation in terms of kernel-smeared connected EEC cumulants. Higher ECFs and EFPs similarly arise from kernels acting on higher-point energy-flow observables \cite{Komiske:2017aww}. Throughout the rest of the paper we work with a finite correlator basis $\{\phi_a\}_{a=1}^p$ chosen from binned EECs, ECFs, and EFPs.

\section{Hypergraphs}
\label{sec:hypergraphs}

Hypergraphs are combinatorial structures in which the basic relation is allowed to involve more than two objects at once. Formally, a hypergraph $\mathcal{H}=(V,E)$ consists of a vertex set $V$ together with a family $E$ of subsets of $V$, called hyperedges; unlike an ordinary graph edge, which always joins exactly two vertices, a hyperedge may connect any number of vertices \cite{Berge:1989,Bretto:2013}. This extension is important whenever the relevant structure is genuinely multiway. A graph can represent only pairwise links, so a collective three-way dependence must be projected onto several ordinary edges and is thereby stripped of its irreducible character. Hypergraphs have therefore become a natural language for higher-order structures in combinatorics and network science, for spectral clustering and embedding methods, and for modern hypergraph neural networks \cite{Bretto:2013,Zhou:2006,DBLP:journals/corr/abs-1809-09401}. More recently they have also been proposed as collider-physics architectures precisely because they can retain higher-order correlations that pairwise graph constructions smear out \cite{Konar:2023ptv}.

\subsection{Weighted hypergraphs and Laplacians}

Let $V$ be a finite set of vertices. In the present paper these vertices will eventually label observables, but for the moment it is enough to think of them as the nodes of the combinatorial structure. For a fixed order $r\ge 2$, an $r$-uniform weighted hypergraph consists of a collection $E^{(r)}$ of subsets $e\subseteq V$ together with a nonnegative weight $w_e^{(r)}$ assigned to each such subset. The condition $|e|=r$ means that every hyperedge in $E^{(r)}$ connects exactly $r$ vertices. Thus $r=2$ reproduces an ordinary weighted graph, while $r=3$ allows one edge to connect three vertices simultaneously.

It is useful to encode this information in matrices. The incidence matrix $B^{(r)}\in \{0,1\}^{|V|\times |E^{(r)}|}$ records which vertices belong to which hyperedges: its rows are labelled by vertices $a\in V$, its columns are labelled by hyperedges $e\in E^{(r)}$, and its entries are
\begin{equation}
    B^{(r)}_{a e}=
    \begin{cases}
        1, & a\in e, \\
        0, & a\notin e.
    \end{cases}
\end{equation}
The diagonal matrix $W^{(r)}=\diag(w_e^{(r)})$ stores the hyperedge weights. From it one defines the vertex degree
\begin{equation}
    d_a^{(r)}=\sum_{e\in E^{(r)}} B^{(r)}_{ae} w_e^{(r)},
\end{equation}
which measures the total order-$r$ weight incident on vertex $a$. Collecting these degrees into the diagonal matrix $D_v^{(r)}=\diag(d_a^{(r)})$, and writing $\delta_e^{(r)}=|e|=r$ for the degree of a hyperedge itself, one also defines the diagonal hyperedge-degree matrix $D_e^{(r)}=\diag(\delta_e^{(r)})$. With this notation, a standard normalised hypergraph Laplacian is \cite{Zhou:2006}
\begin{equation}
    \mathcal{L}^{(r)}
    =
    I
    -
    \left(D_v^{(r)}\right)^{-1/2}
    B^{(r)} W^{(r)}
    \left(D_e^{(r)}\right)^{-1}
    \left(B^{(r)}\right)^\top
    \left(D_v^{(r)}\right)^{-1/2}.
    \label{eq:laplacian}
\end{equation}
For readers more familiar with ordinary graphs, $\mathcal{L}^{(r)}$ plays the same role as the normalised graph Laplacian: it compares a feature at one vertex to a weighted average of features on vertices that participate in the same hyperedges. The ordinary graph is recovered as the special case $r=2$.

In many applications several hyperedge orders are relevant at once. A multi-order hypergraph therefore retains several values of $r$ and combines their Laplacians as
\begin{equation}
    \mathcal{L}_{\mathrm{multi}}
    =
    \sum_{r=2}^{r_{\max}} \alpha_r \mathcal{L}^{(r)},
    \qquad
    \alpha_r\ge 0.
    \label{eq:multi-laplacian}
\end{equation}
The coefficients $\alpha_r$ control how much each order contributes. For example, $\alpha_2$ weights pairwise information, while $\alpha_3$ weights irreducible three-way information.

This distinction is important for the work at hand because a graph projection can discard exactly the higher-order information that the Fisher tensors are meant to capture. A standard reduction, called a clique expansion, replaces one $r$-hyperedge by the full set of $\binom{r}{2}$ pairwise edges among the same vertices. After that replacement, however, the original meaning of the structure is lost: one can no longer tell whether the resulting triangle of pairwise edges came from a single irreducible three-way correlation or from three separate pairwise correlations. 
Thus, in more concrete terms, the difference is the following. If three observables become informative only when they fluctuate together as a triplet, then one 3-hyperedge records that collective effect directly. An ordinary graph can only draw the three pairwise links between them, and that picture looks identical to a situation in which the three pairs were unrelatedly strong. Whenever the first informative structure appears at order three or higher, the hypergraph is therefore the natural object and the graph is only an approximation.

\subsection{Learning rules}

The same ingredients define higher-order message-passing operators. If $H^{(\ell)}\in \mathbb{R}^{|V|\times d_\ell}$ denotes node embeddings at layer $\ell$, a simple multi-order hypergraph layer is
\begin{equation}
    H^{(\ell+1)}
    =
    \sigma\!\left[
    \sum_{r=2}^{r_{\max}}
    \alpha_r
    \left(D_v^{(r)}\right)^{-1/2}
    B^{(r)} W^{(r)}
    \left(D_e^{(r)}\right)^{-1}
    \left(B^{(r)}\right)^\top
    \left(D_v^{(r)}\right)^{-1/2}
    H^{(\ell)} U_r
    \right],
    \label{eq:hgnn}
\end{equation}
with trainable matrices $U_r$ and nonlinearity $\sigma$ \cite{DBLP:journals/corr/abs-1809-09401}. Equation~\eqref{eq:hgnn} is standard once the hyperedge weights have been specified. In applications with signed weights, one may propagate with $|w_e^{(r)}|$ or $(w_e^{(r)})^2$ while carrying the sign as an auxiliary edge feature, and sparsification can be imposed by thresholding small entries. Related collider work has already argued that hypergraph architectures are natural for IRC-safe multipoint correlator structure \cite{Konar:2023ptv,Birch-Sykes:2024gij}. For collider applications the vertices can label EEC bins, ECFs at different $\beta$, or selected EFP monomials; what the present paper adds is a principled way to choose the hyperedge weights from physics itself.

\section{The Fisher--Correlator--Hypergraph Triality}
\label{sec:triality}

Having introduced the three ingredients separately, we can now connect them pairwise. The next three subsections correspond to the three edges of Figure~\ref{fig:triality}. Taken together, they establish the full triangle.

\subsection{Fisher tensors and energy correlators}
\label{subsec:fisher-correlators}

We now return to the exponential-family model of Eq.~\eqref{eq:exp-family} and choose its sufficient statistics from a finite correlator basis,
\begin{equation}
    \phi_a \in
    \left\{
    \text{binned EECs},
    \ e_n^{(\beta)},
    \ \text{EFPs}
    \right\}.
\end{equation}
The statement below is immediate from Theorem~\ref{thm:cumulants}, but it is the core bridge between information geometry and collider observables, including the higher-point energy correlators studied in \cite{Chen:2019bpb}.

\begin{theorem}[Energy correlator embedding]
\label{thm:ecf-embedding}
Let the sufficient statistics in Eq.~\eqref{eq:exp-family} be chosen from a set of EECs, ECFs, or EFPs. Then
\begin{equation}
    \order{n}_{a_1\cdots a_n}(\theta)
    =
    \conn{\phi_{a_1}\cdots \phi_{a_n}}{\theta},
    \qquad n\ge 2.
    \label{eq:correlator-cumulants}
\end{equation}
If $\phi_a$ is a linear functional of the EEC, as in Eq.~\eqref{eq:eec-to-ecf}, then the same tensor is a kernel-smeared connected EEC cumulant.
\end{theorem}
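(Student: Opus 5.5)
The first claim, Eq.~\eqref{eq:correlator-cumulants}, is a specialisation of Theorem~\ref{thm:cumulants}. That theorem never uses any property of the $\phi_a$ other than that they are the sufficient statistics of Eq.~\eqref{eq:exp-family}. So I would first check the standing hypotheses for correlator observables. Binned EECs, $e_n^{(\beta)}$, and EFPs are bounded functions of the event: each is built from the $z_i\in[0,1]$ with $\sum_i z_i=1$ and from bounded angular factors or indicator functions of angular bins. Hence $\avg{\exp(t^a\phi_a)}_\theta<\infty$ for every $t$. This means $\psi$ is finite and real-analytic on all of $\mathbb{R}^p$, and the cumulant-generating-function expansion used in the proof of Theorem~\ref{thm:cumulants} converges. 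Matching coefficients then gives $\order{n}=\conn{\phi_{a_1}\cdots\phi_{a_n}}{\theta}$ for all $n\ge 1$, and I would restrict to $n\ge 2$ only because those are the orders that enter the KL expansion of Corollary~\ref{cor:kl}. Should a reader want unbounded observables (for instance unnormalised energies), I would state an exponential-moment condition at $\theta$ as the only extra assumption.

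For the second claim I would use multilinearity of joint cumulants. Suppose each $\phi_{a}$ is a linear functional of the eventwise EEC,
\begin{equation}
  \phi_a(x)=\int_{-1}^{1}\dd(\cos\chi)\,K_a(\chi)\,\EEC_x(\chi),
\end{equation}
with $K_a(\chi)=\tfrac12(2\sin\frac{\chi}{2})^\beta$ as in Proposition~\ref{prop:eec-ecf}, or with $K_a$ the indicator of a bin for binned EECs. The plan is to regard $\EEC_x(\cdot)$ as a random field under $p_\theta$ and to define its connected $n$-point function $\conn{\EEC(\chi_1)\cdots\EEC(\chi_n)}{\theta}$ through the functional cumulant generating functional $\log\avg{\exp\int J\,\EEC}_\theta$. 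Writing $t^a\phi_a=\int(t^aK_a)\EEC$ and substituting $J=t^aK_a$, the functional expansion gives
\begin{equation}
  \conn{\phi_{a_1}\cdots\phi_{a_n}}{\theta}=\int\prod_{r=1}^n\dd(\cos\chi_r)\,K_{a_r}(\chi_r)\;\conn{\EEC(\chi_1)\cdots\EEC(\chi_n)}{\theta}.
\end{equation}
To avoid functional-analytic issues, an equivalent route is to expand each cumulant as its finite moment-partition formula, $\kappa=\sum_\pi(|\pi|-1)!(-1)^{|\pi|-1}\prod_{B\in\pi}\avg{\prod_{r\in B}\phi_{a_r}}$. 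In each moment I would then exchange the expectation with the $\chi$-integrals by Fubini, which is justified because every $\phi_a$ is a finite sum over particle pairs and is bounded. Combining this with the first claim gives the kernel-smeared EEC cumulant interpretation of $\order{n}$.

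The main obstacle is making ``connected EEC cumulant'' rigorous, because $\EEC_x$ is a distribution (a sum of delta functions), not a function. Its products at coincident angles $\chi_r=\chi_s$ produce delta-squared-type singularities when they are not smeared. I would handle this by defining the EEC cumulants only as symmetric distributions on $[-1,1]^n$, through their action on product test kernels $\prod_r K_{a_r}$. That action is well defined because each factor is a finite sum of point evaluations at the $\cos\chi_{ij}$. So the statement is meant in this weak, smeared sense, which is exactly what the theorem asserts.
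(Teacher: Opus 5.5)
Your proposal is correct and follows the paper's proof. The first claim is Theorem~\ref{thm:cumulants} specialised to a correlator basis, and the second comes from writing $\phi_a=\int K_a(\chi)\,\EEC_x(\chi)\,\dd(\cos\chi)$ and applying multilinearity of cumulants; your boundedness argument for $\psi$ and your Fubini/weak-sense treatment of the EEC cumulants only make explicit what the paper leaves implicit. As a minor note, for each event $\EEC_x^{\otimes n}$ is already a finite positive measure on $[-1,1]^n$, so the moment and cumulant objects are honest signed measures (with possibly diagonal-supported pieces), no delta-squared singularity actually arises, and your weak-sense definition goes through directly.
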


\begin{proof}
The first statement is Theorem~\ref{thm:cumulants} specialised to a correlator basis. For the second statement, write $\phi_a=\int K_a(\chi)\EEC_x(\chi)\,\dd(\cos\chi)$. Multilinearity of cumulants then gives
\begin{equation}
    \conn{\phi_{a_1}\cdots \phi_{a_n}}{\theta}
    =
    \int \prod_{r=1}^{n}\dd(\cos\chi_r)\,
    K_{a_r}(\chi_r)\,
    \conn{\EEC(\chi_1)\cdots \EEC(\chi_n)}{\theta} .
\end{equation}
\end{proof}

This is the first edge of the triangle. In natural coordinates, the higher Fisher tensors are the connected fluctuations of familiar collider observables.

\subsection{Fisher tensors and hypergraphs}
\label{subsec:fisher-hypergraphs}

The second edge of Figure~\ref{fig:triality} maps the same tensor hierarchy into combinatorial language. Introduce a vertex for every observable label $a$. At fixed order, the full Fisher tensor contains components with repeated indices. For example, at cubic order,
\begin{equation}
    \frac{1}{6}\order{3}_{abc}\delta^a\delta^b\delta^c
    =
    \frac{1}{6}\sum_a \order{3}_{aaa}(\delta^a)^3
    +
    \frac{1}{2}\sum_{a\neq b}\order{3}_{aab}(\delta^a)^2\delta^b
    +
    \sum_{a<b<c}\order{3}_{abc}\delta^a\delta^b\delta^c .
    \label{eq:cubic-squarefree}
\end{equation}
The first two terms are retained in the exact KL expansion, but they are not ordinary set-valued 3-hyperedges: they correspond respectively to single-observable skewness and mixed repeated-observable couplings. Equivalently, the exact tensor defines a weighted multihypergraph whose hyperedges may contain repeated vertices. In the hypergraph constructions below we use the square-free projection, retaining only all-distinct index sets as ordinary hyperedges.

For every order $n\ge 2$ and every unordered all-distinct index set $e=\{a_1,\ldots,a_n\}$, define a signed weight
\begin{equation}
    \widetilde w_e^{(n)}=\order{n}_{a_1\cdots a_n}.
    \label{eq:signed-weight}
\end{equation}
When a diffusion operator or Laplacian needs nonnegative weights, we use
\begin{equation}
    w_e^{(n)}=\left|\widetilde w_e^{(n)}\right|
    \qquad \text{or} \qquad
    w_e^{(n)}=\left(\widetilde w_e^{(n)}\right)^2,
    \label{eq:unsigned-weight}
\end{equation}
and retain the sign as an auxiliary attribute. This is necessary because odd-order cumulants can be negative.
We will refer to the square-free order-$n$ hypergraph obtained from the weights
$\widetilde w_e^{(n)}=\order{n}_{a_1\cdots a_n}$ as the \emph{Fisher $n$-hypergraph}, or simply as the \emph{Fisher hypergraph} when the order is clear from context.

\begin{proposition}[Fisher tensors define Fisher hypergraphs]
\label{prop:fisher-hypergraph}
For each order $n\ge 2$, the all-distinct components of $\order{n}_{a_1\cdots a_n}$ define an $n$-uniform signed hypergraph on the observable index set. The full tensor, including repeated-index components, defines the corresponding weighted multihypergraph and supplies the order-$n$ local divergence coefficients in Eq.~\eqref{eq:exact-kl}.
\end{proposition}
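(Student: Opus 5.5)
The plan is to verify the statement directly from the definitions. The proposition is essentially structural, so the work lies in checking well-definedness rather than in any estimate.

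The first step is to establish that $\order{n}_{a_1\cdots a_n}(\theta)$ is a totally symmetric tensor. By Theorem~\ref{thm:cumulants} it equals $\partial_{a_1}\cdots\partial_{a_n}\psi(\theta)$. Since $\psi$ is smooth on the interior of the natural parameter space (it is a log-Laplace transform), mixed partial derivatives commute. Equivalently, joint cumulants are symmetric in their arguments. The value of a component therefore depends only on the multiset $\{a_1,\ldots,a_n\}$.

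Next I treat the all-distinct components. For an unordered set $e$ with $|e|=n$, the weight $\widetilde w_e^{(n)}$ in Eq.~\eqref{eq:signed-weight} does not depend on the chosen ordering, so it is a well-defined real number. Collecting all such $e$ with nonzero weight gives a family $E^{(n)}$ of $n$-element subsets of $V=\{1,\ldots,p\}$. By the definition in Section~\ref{sec:hypergraphs}, this family with the weights $\widetilde w_e^{(n)}$ is an $n$-uniform signed hypergraph. Passing to Eq.~\eqref{eq:unsigned-weight} gives the nonnegative weights required by Eq.~\eqref{eq:laplacian}. If one wants a nonempty structure, $p\ge n$ is needed; otherwise the hypergraph is trivially empty.

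For the multihypergraph claim, I extend the construction to multisets. Each multiset $m$ of size $n$ drawn from $V$ gets the weight $\order{n}_m$, which is well defined by the same symmetry. I then generalise Eq.~\eqref{eq:cubic-squarefree} to arbitrary $n$ by the multinomial theorem:
\begin{equation}
\frac{1}{n!}\order{n}_{a_1\cdots a_n}\delta^{a_1}\cdots\delta^{a_n}
=\sum_{m}\frac{1}{\prod_a m_a!}\,\order{n}_{m}\prod_a(\delta^a)^{m_a},
\end{equation}
where $m_a$ is the multiplicity of $a$ in $m$. Here I use that the number of orderings of $m$ is $n!/\prod_a m_a!$. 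This step is the only one requiring care: one must get the multiplicity counting right so that the sum over multihyperedges reproduces the tensor contraction exactly. The square-free part ($m_a\le 1$) has unit prefactor, matching the last term of Eq.~\eqref{eq:cubic-squarefree}.

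Finally, I substitute this decomposition into Corollary~\ref{cor:kl}. The order-$n$ term of $\KL(p_\theta\Vert p_{\theta+\delta})$ is then exactly the weighted sum over the multihyperedges of order $n$. So the full tensor supplies the order-$n$ local divergence coefficients, which completes the proposition.
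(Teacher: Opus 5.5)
Your proof is correct and takes the same route as the paper. The paper defines the square-free hypergraph directly by Eq.~\eqref{eq:signed-weight} and cites Corollary~\ref{cor:kl} for the divergence coefficients, while your symmetry check and your extension of Eq.~\eqref{eq:cubic-squarefree} to arbitrary $n$ with prefactor $1/\prod_a m_a!$ make explicit the well-definedness and multiplicity counting that the paper leaves implicit.
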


\begin{proof}
The square-free hypergraph is defined by Eq.~\eqref{eq:signed-weight}. Corollary~\ref{cor:kl} identifies the full tensor components as the coefficients of the exact KL expansion. Restricting to all-distinct components gives the set-valued hypergraph used below, while keeping repeated indices gives the exact multihypergraph representation.
\end{proof}

The pairwise Fisher graph is the special case $n=2$. The cubic Fisher tensor is the first genuinely new layer, because it defines 3-hyperedges that cannot be represented faithfully by pairwise edges without losing the distinction between irreducible three-way structure and a dense collection of ordinary correlations.

\subsection{Energy correlators and hypergraphs}
\label{subsec:correlators-hypergraphs}

The remaining edge of the triangle is then a direct physical interpretation of the hypergraph. Because each vertex labels a correlator observable, a hyperedge records an irreducible joint fluctuation among a set of correlators.

\begin{proposition}[Correlator hyperedges]
\label{prop:correlator-hyperedges}
With the square-free construction above, the ordinary $n$-hyperedge $e=\{a_1,\ldots,a_n\}$, with all indices distinct, is present precisely when the connected cumulant $\conn{\phi_{a_1}\cdots \phi_{a_n}}{\theta}$ is nonzero. If each $\phi_a$ is a linear functional of the EEC, then the hyperedge weight is the corresponding kernel-smeared connected EEC cumulant.
\end{proposition}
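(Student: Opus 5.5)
The plan is to chain together three results already in hand: the definition of the square-free Fisher hypergraph in Eq.~\eqref{eq:signed-weight}, the cumulant identification of Theorem~\ref{thm:cumulants} (in its correlator form, Theorem~\ref{thm:ecf-embedding}), and the multilinearity computation in the proof of Theorem~\ref{thm:ecf-embedding}. No new analysis is needed; the work is to be precise about what "present" means for a hyperedge.

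First I fix the convention. In the square-free construction, an all-distinct index set $e=\{a_1,\ldots,a_n\}$ counts as an ordinary $n$-hyperedge exactly when its signed weight $\widetilde w_e^{(n)}$ is nonzero. This is the natural reading of a weighted hypergraph in which zero-weight sets are omitted, and it applies before any thresholding used for sparsification. The weight is well defined on the unordered set because $\order{n}_{a_1\cdots a_n}=\partial_{a_1}\cdots\partial_{a_n}\psi$ is symmetric in its indices, by equality of mixed partials of the smooth log-partition function on the interior of the natural parameter domain.

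Second, I substitute Eq.~\eqref{eq:correlator-cumulants}, which gives $\widetilde w_e^{(n)}=\conn{\phi_{a_1}\cdots\phi_{a_n}}{\theta}$. Then $\widetilde w_e^{(n)}\neq0$ if and only if the connected cumulant is nonzero, which proves the first claim. The same holds for the unsigned weights of Eq.~\eqref{eq:unsigned-weight}, since both $|\cdot|$ and $(\cdot)^2$ vanish exactly at zero. So the presence criterion does not depend on which of these weight choices is used.

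Third, for the EEC statement, I write $\phi_a=\int K_a(\chi)\,\EEC_x(\chi)\,\dd(\cos\chi)$ as in Proposition~\ref{prop:eec-ecf}. The multilinearity display in the proof of Theorem~\ref{thm:ecf-embedding} then expresses $\widetilde w_e^{(n)}$ as $\int\prod_r K_{a_r}(\chi_r)\,\conn{\EEC(\chi_1)\cdots\EEC(\chi_n)}{\theta}$, which is the kernel-smeared connected EEC cumulant. The only delicate point, and the one I expect to need care, is justifying that exchange of integration and cumulant. The plan is to note that the cumulant is a finite polynomial in joint moments, and to use Fubini. For each moment this requires $\int\prod_r|K_{a_r}|\,\avg{\prod_r\EEC(\chi_r)}_\theta<\infty$. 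That bound holds because $\sum_{i<j}2z_iz_j\le1$ and the kernels are bounded (for example $(2\sin\frac{\chi}{2})^\beta\le 2^\beta$), so every moment involved is finite under $p_\theta$.
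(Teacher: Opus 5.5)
Your proposal is correct and follows the same route as the paper. The paper obtains the first claim by combining Eq.~\eqref{eq:signed-weight} with Eq.~\eqref{eq:correlator-cumulants}, and the second from the kernel representation plus multilinearity of cumulants, as in Theorem~\ref{thm:ecf-embedding}. Your additional steps fill in details the paper leaves implicit, and they are sound: taking ``present'' to mean nonzero signed weight, noting index symmetry and that $|\cdot|$ and $(\cdot)^2$ vanish only at zero, and justifying the exchange of integral and cumulant by Fubini using bounded kernels and $2\sum_{i<j}z_iz_j=1-\sum_i z_i^2\le 1$.
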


\begin{proof}
The first statement follows by combining Eq.~\eqref{eq:signed-weight} with Eq.~\eqref{eq:correlator-cumulants}. For the second statement, use the kernel representation $\phi_a=\int K_a(\chi)\EEC_x(\chi)\,\dd(\cos\chi)$ together with multilinearity of cumulants, exactly as in the proof of Theorem~\ref{thm:ecf-embedding}.
\end{proof}

This subsection completes Figure~\ref{fig:triality} at the level of its edges. A hyperedge is not an arbitrary machine-learning primitive: in the present setting it is a connected multipoint fluctuation of collider observables.

\begin{theorem}[Triality]
\label{thm:triality}
For an exponential-family model built from energy correlators, the tensor $\order{n}_{a_1\cdots a_n}$ has three equivalent interpretations:
\begin{enumerate}
    \item it is the coefficient of $\delta^{a_1}\cdots\delta^{a_n}/n!$ in the exact KL expansion of Eq.~\eqref{eq:exact-kl};
    \item it is the connected cumulant of the correlator observables, Eq.~\eqref{eq:correlator-cumulants};
    \item it is the signed weight of the $n$-multihyperedge $\{\!\{a_1,\ldots,a_n\}\!\}$ in the Fisher multihypergraph, with the ordinary Fisher hypergraph obtained by restricting to all-distinct indices.
\end{enumerate}
\end{theorem}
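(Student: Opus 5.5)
The plan is to treat the theorem as a bookkeeping statement: all three interpretations are realised by the single object $\order{n}_{a_1\cdots a_n}(\theta)=\partial_{a_1}\cdots\partial_{a_n}\psi(\theta)$. Each item then follows from an earlier result, and "equivalence" means that the three descriptions name literally the same number for every index tuple.

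\textbf{Item 1.} I would invoke Corollary~\ref{cor:kl}. Because $\psi$ is the log-partition function of the exponential family in Eq.~\eqref{eq:exp-family}, the divergence equals $\psi(\theta+\delta)-\psi(\theta)-\delta^a\partial_a\psi(\theta)$. Taylor expansion then gives Eq.~\eqref{eq:exact-kl}, whose order-$n$ coefficient of $\delta^{a_1}\cdots\delta^{a_n}/n!$ is the symmetric tensor $\order{n}$. One point needs checking: the coefficient of a monomial is unique only after symmetrisation. This is harmless, because $\order{n}$ is a mixed partial derivative of a smooth function and is therefore already symmetric in its indices.

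\textbf{Item 2.} This is Theorem~\ref{thm:ecf-embedding}, which is Theorem~\ref{thm:cumulants} with the sufficient statistics taken from the correlator basis. The key step is matching the Taylor coefficients of $\log\avg{\exp(t^a\phi_a)}_\theta$ against those of $\psi(\theta+t)-\psi(\theta)$.

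\textbf{Item 3.} Here I would use Proposition~\ref{prop:fisher-hypergraph} and the decomposition in Eq.~\eqref{eq:cubic-squarefree}, generalised to order $n$. Group the index tuples by their underlying multiset $\{\!\{a_1,\ldots,a_n\}\!\}$. By symmetry, every tuple in a class has the same tensor component, so assigning that component as the multihyperedge weight is well defined. The all-distinct multisets are exactly the ordinary $n$-hyperedges with weight as in Eq.~\eqref{eq:signed-weight}. Proposition~\ref{prop:correlator-hyperedges} then closes the triangle by identifying these weights with the connected cumulants.

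\textbf{Main obstacle.} I expect the only real difficulty to be making the expansions legitimate, not the algebra. The KL series and the cumulant series need $\psi$ to be analytic near $\theta$. This holds when $\theta$ lies in the interior of the natural parameter space, that is, when $\avg{\exp(t^a\phi_a)}_\theta<\infty$ for $t$ in a neighbourhood of $0$. For bounded correlator observables such as binned EECs or ECFs on normalised $z_i$, every $\phi_a$ is bounded. The moment generating function is then entire, and the equalities hold exactly with convergent series. I would state this as the standing assumption and add that, for unbounded statistics, the identities still hold term by term as derivatives, which is all items 2 and 3 require.
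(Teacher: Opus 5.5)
Your proposal is correct and follows the same route as the paper, whose proof simply cites Corollary~\ref{cor:kl}, Theorem~\ref{thm:ecf-embedding}, and Proposition~\ref{prop:fisher-hypergraph} for the three items and notes that all three name the same tensor components; your symmetry check for the multiset weights and your interior-of-parameter-space assumption are sensible extra care that the paper leaves implicit. One small caution: an entire moment generating function does not make the Taylor series of $\psi$ converge for all $\delta$, because complex zeros of $\avg{\exp(t^a\phi_a)}_\theta$ bound its radius (this already happens for a single bounded binary statistic), so Eq.~\eqref{eq:exact-kl} is guaranteed only for $\delta$ in a neighbourhood of $0$, which is all the theorem requires.
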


\begin{proof}
Item 1 is Corollary~\ref{cor:kl}. Item 2 is Theorem~\ref{thm:ecf-embedding}. Item 3 is Proposition~\ref{prop:fisher-hypergraph}. All three refer to the same tensor components.
\end{proof}

The theorem illustrates that a measured or simulated correlator cumulant can be interpreted immediately as an information-geometric coefficient and as a hyperedge weight. Conversely, a hypergraph learning rule initialised from $\order{3}$ is initialised from the cubic term of the exact local divergence. At this stage, the three edges displayed already in Figure~\ref{fig:triality} have all been established explicitly.

\section{Applications of the Triality Relation}
\label{sec:numerics}

Once the triality has been established, the same tensor hierarchy can be used in several different ways. In this section we focus on four applications. The first is a minimal toy study, included because it shows in the cleanest possible setting how the cubic Fisher tensor improves the local KL model and simultaneously defines the first non-trivial hypergraph. The second is a prong-substructure compression benchmark, where the triality is used to retain the observables most sensitive to two- versus three-prong jet structure. The third is observable discovery: the hypergraph identifies which correlator collections are genuinely irreducible and therefore worth promoting to a compact analysis basis. The fourth is physics-informed learning: instead of choosing a hypergraph architecture heuristically, one can initialise it directly from the measured or simulated Fisher tensors. All four applications are developed numerically, and all follow from the same map between local distinguishability, correlator cumulants, and hyperedge weights. Before progressing to the actual studies, a quick description of the simulation setup that is used for the generation of Monte-Carlo (MC) samples is provided in the following subsection.

\subsection{Simulation Setup}
\label{subsec:sim_setup}
Simulated event samples for all processes considered in this study are generated using a common
chain of Monte Carlo tools. Hard-scattering matrix elements are computed at leading order (LO)
with \textsc{MadGraph5\_aMC@NLO}~\cite{Alwall:2014hca} using the Standard Model with CKM
quark mixing (\texttt{SM-CKM} model). The \texttt{NNPDF3.1\_nnlo\_as\_0118\_luxqed} parton
distribution function (PDF) set~\cite{Ball:2017nwa,Bertone:2017bme} is adopted throughout.
Parton showering and hadronization are performed with \textsc{Pythia\,8}~\cite{Sjostrand:2014zea},
employing the CMS \texttt{CP5} underlying-event tune~\cite{CMS:2019csb}. The resulting
particle-level events are subsequently passed through \textsc{Delphes\,3}~\cite{deFavereau:2013fsa},
a fast parametric detector simulation, configured with the official CMS detector card distributed
with the package. The specific processes and their corresponding sample sises are described in the
sections that follow.

\subsection{Minimal Study of Local KL and Triplet Structure}
\label{sec:toy-application}

We begin with a minimal class-mixture example because it gives the cleanest low-dimensional view of the full triality. In this four-observable setting one can see directly that the cubic Fisher tensor both improves the local divergence model and isolates the first genuinely irreducible triplet in correlator space.

\subsubsection{Setup}
This study is intentionally simple. We use $100{,}000$ events, retaining up to the $25$ highest-\pt constituents, from a $22\%$ admixture of QCD jets and hadronic top decay jets. For both samples, we apply the following kinematic selections to select sufficiently boosted events in a mass window around the top quark mass:
\begin{itemize}
    \item $p_\mathrm{T,jet} > 300\,$\gev
    \item $m_\mathrm{SD}\in [140,220]\,$\gev
\end{itemize}
The resulting feature distribution is visibly non-Gaussian and therefore has a nonzero cubic Fisher tensor.

The observable basis is
\begin{align}
    \phi_1 &= \EEC_{\mathrm{narrow}}
    =
    \sum_{i<j} z_i z_j\, \mathbf{1}_{0.06 < \Delta R_{ij}\le 0.20},  \label{eq:eec_narrow}
    \\
    \phi_2 &= \EEC_{\mathrm{wide}}
    =
    \sum_{i<j} z_i z_j\, \mathbf{1}_{0.20 < \Delta R_{ij}\le 0.80}, \label{eq:eec_wide}
    \\
    \phi_3 &= e_2^{(1)}, \label{eq:e21}
    \\
    \phi_4 &= e_3^{(1)}. \label{eq:e31}
\end{align}
The first two observables are binned EECs; the last two are ECFs. This basis already mixes the pairwise EEC language with higher-point correlators, making it a minimal setting in which to test the triality.

\subsubsection{Third Order Improves the Local Divergence Model}

We standardise the feature vector and define the physically relevant local direction by the class-mean difference,
\begin{equation}
    v
    \propto
    \mu_{\mathrm{top}}-\mu_{\mathrm{QCD}}
    \label{eq:kl-direction}
\end{equation}
Equation~\eqref{eq:exact-kl} then gives the exact KL series along $v$. Figure~\ref{fig:mc-kl} compares the exact divergence to its quadratic and quadratic-plus-cubic truncations.

\begin{figure}[t]
    \centering
    \includegraphics[width=0.72\textwidth]{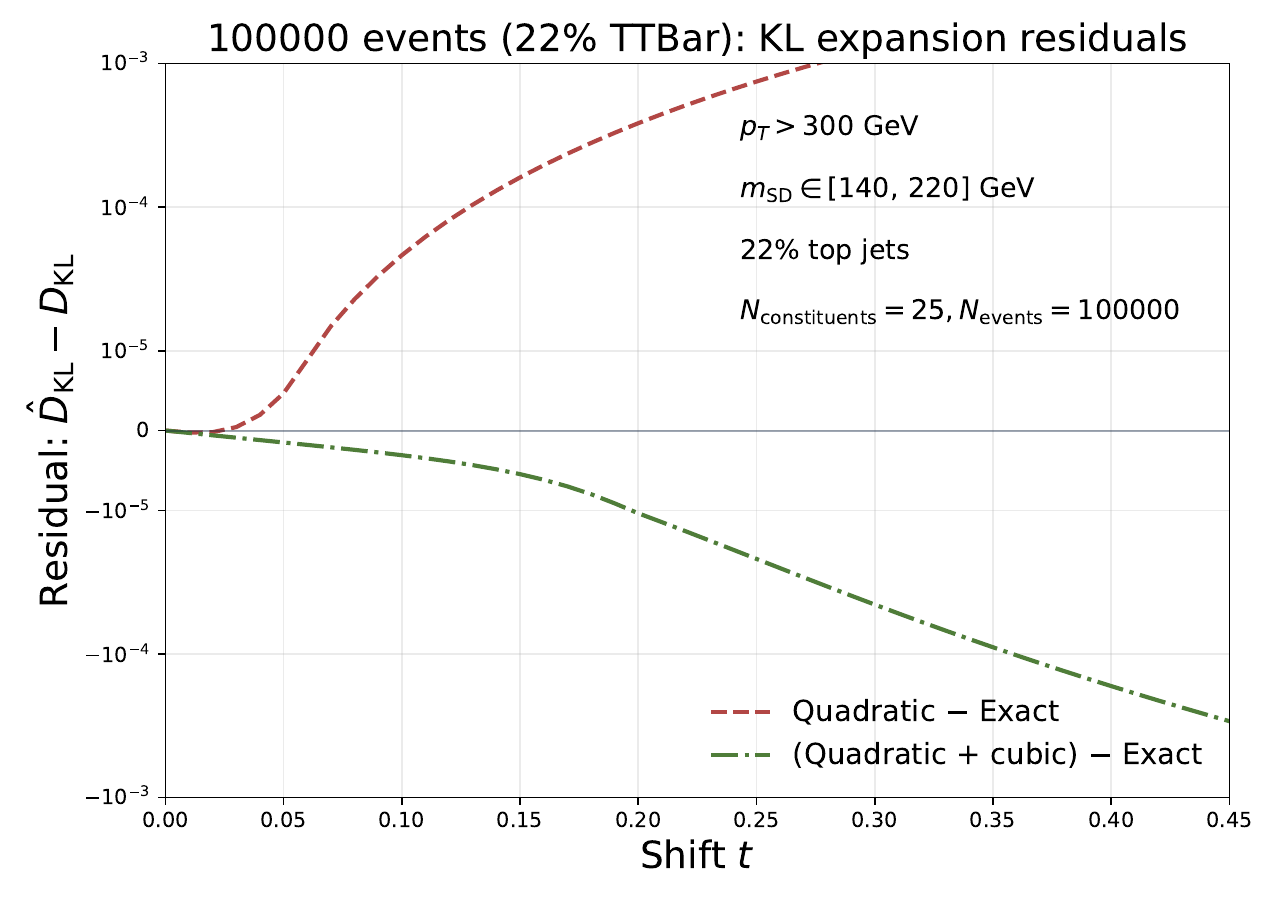}
    \caption{Residuals of the KL divergence along the feature direction in Eq.~\eqref{eq:kl-direction}, for the quadratic Fisher approximation and the approximation including the cubic Fisher tensor, with respect to the exact KL expansion. The cubic term extends the range over which the local information model tracks the true divergence, leading to an approximately $30\mathrm{x}$ reduction in error at the evaluation point $t=0.15$.}
    \label{fig:mc-kl}
\end{figure}

\begin{table}[t]
    \centering
    \small
    \begin{tabular}{ccccc}
        \toprule
        $t$ & exact KL & quadratic & quadratic$+$cubic & error reduction \\
        \midrule
        $0.15$ & $0.023219$ & $0.023380$ & $0.023213$ & $29.5\mathrm{x}$ \\
        $0.30$ & $0.092232$ & $0.093518$ & $0.092187$ & $28.4\mathrm{x}$ \\
        $0.45$ & $0.206216$ & $0.210416$ & $0.205922$ & $14.3\mathrm{x}$ \\
        $0.60$ & $0.364615$ & $0.374073$ & $0.363421$ & $7.9\mathrm{x}$ \\
        \bottomrule
    \end{tabular}
    \caption{Selected points from the exact KL curve. ``Error reduction'' is the ratio of the quadratic truncation error to the quadratic-plus-cubic truncation error. The cubic term is most useful in the genuinely local regime and remains competitive up to moderate shifts.}
    \label{tab:kl}
\end{table}

Table~\ref{tab:kl} makes the same point numerically. The quadratic approximation systematically underestimates the divergence because the local feature distribution is positively skewed along $v$. Adding the cubic Fisher tensor repairs most of that bias for small and moderate shifts. This is the operational meaning of $\order{3}$ in the present setting: it is the first controlled correction to the Fisher-matrix approximation of local distinguishability.

The three legs of the triality are already active in this minimal example. On the Fisher side, $\order{2}$ and $\order{3}$ are the coefficients of the local KL expansion, and Fig.~\ref{fig:mc-kl} shows that the cubic term has direct operational meaning. On the correlator side, those same coefficients are estimated from the EEC/ECF basis as connected moments, so the skewed response along $v$ is tied to a concrete radiation-pattern fluctuation. On the hypergraph side, the same $\order{3}_{abc}$ will become the weight of a 3-hyperedge, allowing the dominant triplet to be read off combinatorially rather than only statistically.

\subsubsection{The Hypergraph Sees Structure the Graph Smears Out}

The pairwise Fisher graph is built from the standardised covariance matrix, while the order-three Fisher hypergraph is built from the distinct cubic weights $\order{3}_{abc}$. The left panel of Figure~\ref{fig:centralities} shows that with standardised features, the pairwise Fisher graph is saturated at second order: all large off-diagonal correlations sit close to $\pm 1$, so pairwise rankings cannot, in this simple study, tell us which observables carry genuinely independent higher-order information. To isolate the third-order content, we move to a whitened basis,
\begin{equation}
z_a = (I^{-1/2})_{ab}\, \big(\phi_b - \avg{\phi_b}_\theta\big),
\end{equation}
where $I_{ab}$ is the second-order Fisher (covariance) matrix. By construction this diagonalises the second-order Fisher tensor, $\tilde I_{ab} = \delta_{ab}$, so that all pairwise correlations among the new features vanish identically; any remaining structure in the third-order tensor $\tilde I_{abc}$ must therefore be irreducible three-point information, decoupled from second-order content.

The whitened features are linear combinations of the original observables, but as shown in Table~\ref{tab:whitening_loadings} the rotated features remain dominated by a single original feature in most cases: $F_0$ by the narrow-angle EEC, $F_1$ by the wide-angle EEC, and $F_3$ by $e_3^{(1)}$. The exception is $F_2$, whose dominant contribution comes from the higher-order $e_3^{(1)}$ rather than from $e_2^{(1)}$, though these contributions are almost equal. Whitening forces all pairwise centralities to be trivially equal and isolates the genuinely third-order centralities, shown in the right panel of Figure~\ref{fig:centralities}. The feature whose dominant contribution comes from $e_3^{(1)}$ is promoted to the most central node of the hypergraph, in contrast to the pairwise graph where $e_2^{(1)}$ ranks highest. Physically, the hypergraph correctly recognises that $e_3^{(1)}$ participates in the strongest irreducible three-prong-sensitive triplet. Figure~\ref{fig:hyperedge_weights} confirms this: the strongest irreducible triplet hyperedge ties together the wide-angle EEC (dominant contributor to $F_1$) with the second- and third-order ECFs (the latter dominantly contributing to $F_2$ and $F_3$).

\begin{table}[ht]
\centering
\begin{tabular}{lcccc}
\toprule
 & $F_0$ & $F_1$ & $F_2$ & $F_3$ \\
\midrule
$\mathrm{EEC_{narrow}}$   & $\mathbf{\phantom{-}0.8547}$ & $\phantom{-}0.1147$ & $\phantom{-}0.0093$ & $-0.0008$ \\
$\mathrm{EEC_{wide}}$     & $\phantom{-}0.1049$ & $\mathbf{\phantom{-}0.6708}$ & $-0.0637$           & $\phantom{-}0.0013$ \\
$e_2^{(1)}$  & $\phantom{-}0.0252$ & $-0.1878$           & $\phantom{-}0.4100$ & $-0.0719$ \\
$e_3^{(1)}$  & $-0.0153$           & $\phantom{-}0.0266$ & $\mathbf{-0.5169}$           & $\mathbf{\phantom{-}0.9260}$ \\
\bottomrule
\end{tabular}
\caption{Contributions to the whitened features $F_i$ in terms of the original observables from \cref{eq:eec_narrow,eq:eec_wide,eq:e21,eq:e31}, with the contributions normalised such that their absolute values sum to $1$.}
\label{tab:whitening_loadings}
\end{table}

\begin{figure}[htbp]
\centering
\includegraphics[width=0.48\textwidth]{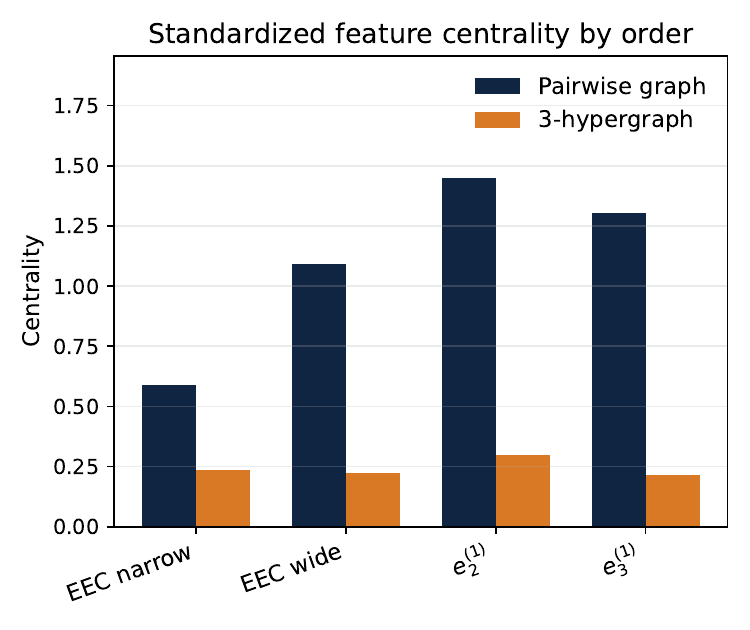}
\hfill
\includegraphics[width=0.48\textwidth]{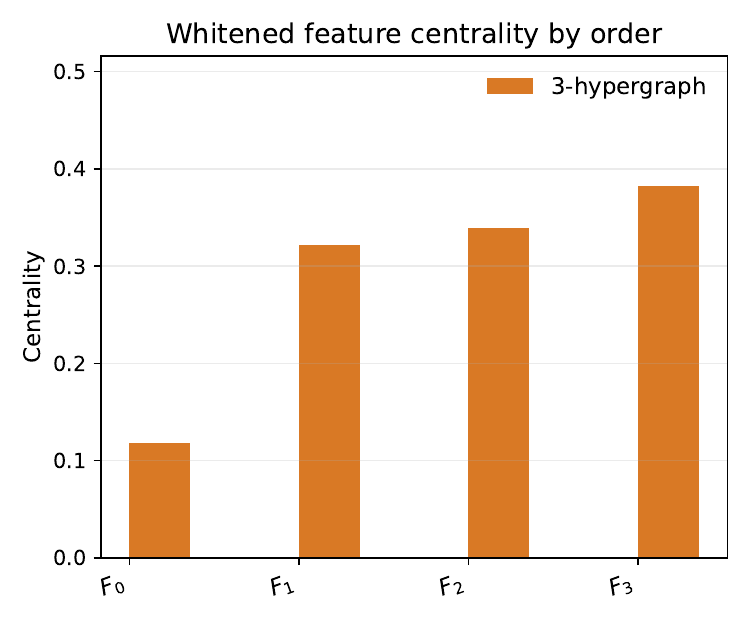}
\caption{Node centralities before and after removing pairwise structure. Left: in the standardised observable basis, the pairwise Fisher graph is saturated, while the orange bars show the corresponding third-order Fisher-hypergraph centralities. Right: in the whitened basis, the second-order Fisher matrix is the identity, so the pairwise graph has no non-trivial off-diagonal edges and the blue centrality bars vanish; the orange bars are therefore the remaining irreducible third-order hypergraph centralities. The scales in both plots are chosen to highlight the relative values.}
\label{fig:centralities}
\end{figure}

\begin{figure}[htbp]
\centering
\includegraphics[width=0.7\textwidth]{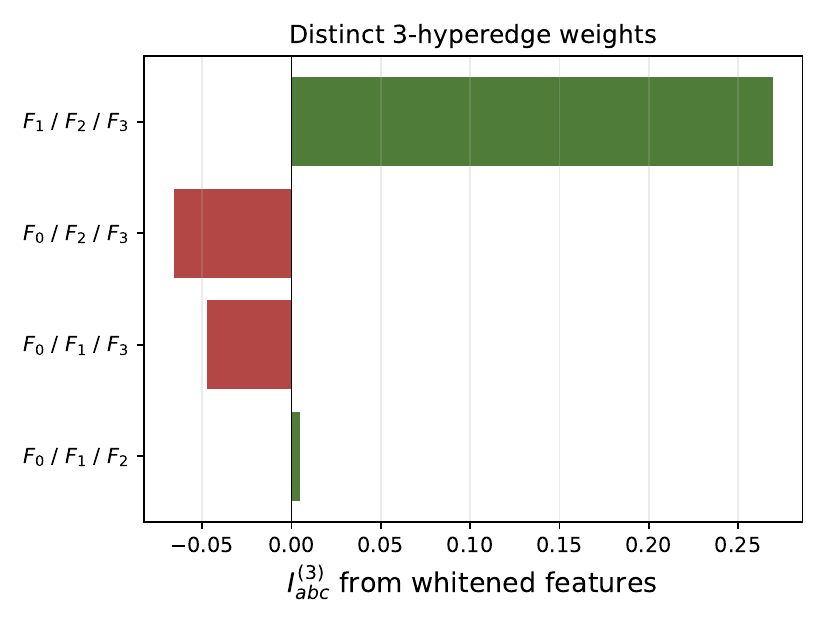}
\caption{Distinct 3-hyperedge weights in the whitened basis $F_0,\ldots,F_3$ defined in Table~\ref{tab:whitening_loadings}. Since whitening removes the non-trivial pairwise Fisher graph, the plotted coefficients are irreducible cubic Fisher weights. The dominant positive hyperedge $F_1/F_2/F_3$ is aligned, through the whitening loadings, with the wide-angle $\EEC$ and the second- and third-order $\mathrm{ECF}$ directions.}
\label{fig:hyperedge_weights}
\end{figure}

This compact study already closes the triangle. The cubic Fisher tensor is simultaneously a cubic correction to the local divergence, a connected three-observable cumulant of the correlator basis, and a 3-hyperedge weight used by the learning operator in Eq.~\eqref{eq:hgnn}. The same measured tensor therefore governs local inference, observable interpretation, and hypergraph propagation.

\subsection{Substructure Probes and Fisher-Hypergraph Compression}
\label{sec:substructure-application}

The second application asks whether the triality is sensitive to the prong substructure of hadronically decaying massive particles. Resolving the number of hard prongs inside a fat jet underlies essentially every modern boosted-object tagger, from dedicated substructure observables~\cite{Thaler:2010tr,Butterworth:2008iy} and shower deconstruction~\cite{Soper:2011cr} to large-parameter deep-learning architectures~\cite{Kasieczka:2019dbj}, and it is therefore a natural setting in which to ask whether the higher-order geometric structure captured by the triality carries discriminating information beyond what the Fisher graph alone encodes. The strategy is to construct two benchmark samples with cleanly separated hard substructure, estimate the Fisher information matrix and the leading non-trivial higher-order tensor, and convert them respectively into a graph and a hypergraph of observable correlations. We then select a basis of observables designed to probe two- and three-body structure together with the intra-jet radiation pattern~\cite{Thaler:2010tr,Komiske:2017aww,Larkoski:2017jix}, and retain the smallest sub-basis that preserves the structural divergence. The simplest physically motivated instance of this construction, a two-prong $W\to q\bar{q}$ decay versus a three-prong $t\to bq\bar{q}$ decay~\cite{Kaplan:2008ie,Kogler:2018hem}, is used to illustrate the procedure in the following sections.
\subsubsection{Benchmark and Compression Rule}

The benchmark dataset comprises $100{,}000$ jets each for the $W$ boson and hadronic top quark signal classes, clustered using the anti-$k_t$ algorithm~\cite{Cacciari:2008gp} with $R = 0.8$ (AK8). Jets are generated using the simulation setup described in Section~\ref{subsec:sim_setup}, with a transverse momentum requirement of $p_\mathrm{T} > 300\,\gev$ and mass-window selections centred on the respective $W$ boson and top quark masses. This sample serves as a concrete, physically motivated reference point at which the local Fisher tensors are estimated; the subsequent analysis should therefore be understood as a local compression benchmark evaluated on a realistic jet topology rather than a generalised information-geometric study. The observable basis is:
\begin{align}
    \phi_1 &= \EEC_{1}
    =
    \sum_{i<j} z_i z_j\, \mathbf{1}_{0.02 < \Delta R_{ij}\le 0.1},
    \\
    \phi_2 &= \EEC_{2}
    =
    \sum_{i<j} z_i z_j\, \mathbf{1}_{0.1 < \Delta R_{ij}\le 0.25},
    \\
    \phi_3 &= \EEC_{3}
    =
    \sum_{i<j} z_i z_j\, \mathbf{1}_{0.25 < \Delta R_{ij}\le 0.4},
    \\
    \phi_4 &= \EEC_{4}
    =
    \sum_{i<j} z_i z_j\, \mathbf{1}_{0.40 < \Delta R_{ij}\le 0.85},
    \\
    \phi_{5\dots13} &= e_2^{(\beta)},~\beta \in [0.25,0.5,0.75,1.0,1.25,1.5,2.0,2.5,3.0]
    \\
    \phi_{14\dots 19} &= e_3^{(\beta)},~\beta \in [0.5,1.0,1.25,1.5,2.0,2.5]
\end{align}
Rather than claim a realistic shower-level substructure tagger, we isolate the local compression problem itself by encoding the $W\to q\bar{q}$ versus $t\to bq\bar{q}$ distinction through a normalised benchmark score direction defined using the class-mean differences between the two-prong and three-prong samples
\begin{equation}
    \mathbf{u}
    =
    \mu_\mathrm{top}-\mu_W,
    \label{eq:prong-direction}
\end{equation}
where $\mu$ represents the mean of the 19-observable basis vector in the order written above. For a retained feature subset $S$, we define $u_S$ by setting the discarded components of $u$ to zero.

The pairwise graph baseline ranks observables by the quadratic node score
\begin{equation}
    s_a^{(2)}
    =
    \left|u_a \, \order{2}_{ab} u_b\right|,
    \label{eq:graph-score}
\end{equation}
while the multi-order Fisher hypergraph uses the additional cubic score
\begin{equation}
    s_a^{(3)}
    =
    \left|u_a \, \order{3}_{abc} u_b u_c\right|.
    \label{eq:hyper-score}
\end{equation}
We normalise both score vectors to unit sum,
\begin{equation}
    \widehat s_a^{(r)}
    =
    \frac{s_a^{(r)}}{\sum_b s_b^{(r)}},
    \qquad
    r=2,3,
\end{equation}
and define the direction-aligned multi-order node score
\begin{equation}
    s_a^{\mathrm{multi}}
    =
    \widehat s_a^{(2)}
    +
    \lambda_3 \widehat s_a^{(3)},
    \qquad
    \lambda_3=3.
    \label{eq:multi-score}
\end{equation}
In the present nineteen-observable benchmark, the two largest quadratic scores identify the pairwise-graph core
\begin{equation}
    S_0^{(2)}=\{e_2^{(1.25)},e_2^{(1)}\},
\end{equation}
while the multi-node score identifies the hypergraph core
\begin{equation}
    S_0^{(\mathrm{multi})}=\{e_3^{(1)},e_3^{(1.25)}\}.
\end{equation}
We use the node scores above for interpretation and for the left panel of Fig.~\ref{fig:selector_scores}, but the compression itself is performed by greedy forward selection. This makes the substructure sensitivity study structurally consistent with the basis-design benchmark in Section~\ref{sec:basis-application}.

For each selector we denote its starting core by $S_0$, with $S_0\equiv S_0^{(2)}$ for the pairwise-graph baseline and $S_0\equiv S_0^{(\mathrm{multi})}$ for the Fisher hypergraph. For a retained subset $S\supseteq S_0$ and a candidate observable $a\notin S$, define the mean pairwise redundancy
\begin{equation}
    \rho(a,S)
    =
    \frac{1}{|S|}
    \sum_{b\in S} \left|\order{2}_{ab}\right|,
\end{equation}
the redundancy-suppressed quadratic marginal gain
\begin{equation}
    \Delta_2(a\mid S)
    =
    \frac{
    \left|
    u_a^2 \order{2}_{aa}
    \right|
    +
    \left|
    2u_a \sum_{b\in S}\order{2}_{ab}u_b
    \right|
    }{
    1+\rho(a,S)
    },
    \label{eq:eft-delta2}
\end{equation}
and the aligned cubic completion gain
\begin{equation}
    \Delta_3(a\mid S)
    =
    \sum_{\{b,c\}\subseteq S}
    \left|
    u_a u_b u_c \order{3}_{abc}
    \right|.
    \label{eq:eft-delta3}
\end{equation}
To compare the quadratic and cubic gains on the same scale, we introduce the fixed rescaling factor
\begin{equation}
    r_3
    =
    \frac{
    \max_{a\notin S_0}\Delta_2(a\mid S_0)
    }{
    \max_{a\notin S_0}\Delta_3(a\mid S_0)
    }.
    \label{eq:eft-r3}
\end{equation}
The pairwise graph baseline then grows the basis by
\begin{equation}
    a^\star_{\mathrm{graph}}(S)
    =
    \operatorname*{arg\,max}_{a\notin S}
    \Delta_2(a\mid S),
    \label{eq:eft-graph-greedy}
\end{equation}
while the Fisher hypergraph uses
\begin{equation}
    a^\star_{\mathrm{hyper}}(S)
    =
    \operatorname*{arg\,max}_{a\notin S}
    \Big[
    \Delta_2(a\mid S)
    +
    \lambda_3 r_3 \Delta_3(a\mid S)
    \Big],
    \qquad
    \lambda_3=3.
    \label{eq:eft-hyper-greedy}
\end{equation}
With each selector starting from a core of two features, defined by the highest respective node scores,  each subsequent iteration is a non-trivial selection, and cubic triplet completion becomes relevant.

\subsubsection{Classification Performance Under Compression}

To compare the compressed observable bases selected by the two strategies, we train a Boosted Decision Tree (BDT) classifier of maximum depth 3 and 200 estimators on each retained subset, using the classification of jet events as arising from either a hadronic top quark decay ($t \to bq\bar{q}$, three-pronged) or a $W$ boson decay ($W \to q\bar{q}$, two-pronged) as the target task. Classification performance is quantified by the Area Under the ROC Curve (AUC) score, evaluated using a stratified $k$-fold cross-validation strategy with $k = 6$ folds, ensuring robust performance estimates against statistical fluctuations in the training sample. This protocol enables a direct, interpretable comparison of the substructure sensitivity retained by each compressed basis at each compression level.

The results are shown in Figure~\ref{fig:auc_wtop_study}. The hypergraph selection yields higher AUC scores throughout the compressed and intermediate regimes, demonstrating that the multi-node scoring strategy retains superior sensitivity to jet substructure when the observable basis is restricted to a small number of features. The performance gap is most pronounced at intermediate compression levels, where the two selection strategies diverge most strongly in their chosen observables, and closes once almost the full basis is retained. This improvement is not an artifact of the classifier or the training procedure, but reflects the intrinsic informativeness of the selected observables with respect to the prong multiplicity of the jet.

The origin of this gain is identified in the left plot of Figure~\ref{fig:selector_scores}, which contrasts the per-observable scores assigned by the pairwise (quadratic) and hypergraph selectors. The hypergraph scoring correctly elevates higher-order Energy Correlation Functions (ECFs) to the top of the ranked list, observables that the pairwise selector systematically undervalues. These higher-order ECFs encode irreducible three-point correlation information among jet constituents, which is precisely the structure that distinguishes a three-pronged hadronic top decay from the two-pronged $W$ topology. The pairwise selector, by construction sensitive only to marginal two-body correlations, cannot capture the collective discriminative value of such higher-order observables, and consequently creates a compressed basis that is structurally misaligned with the substructure signal. The hypergraph selector, by rewarding observables for their multi-point correlation content, recovers this information and translates it into measurable gains in classification performance.

\begin{figure}[t]
    \centering
    \includegraphics[width=0.98\textwidth]{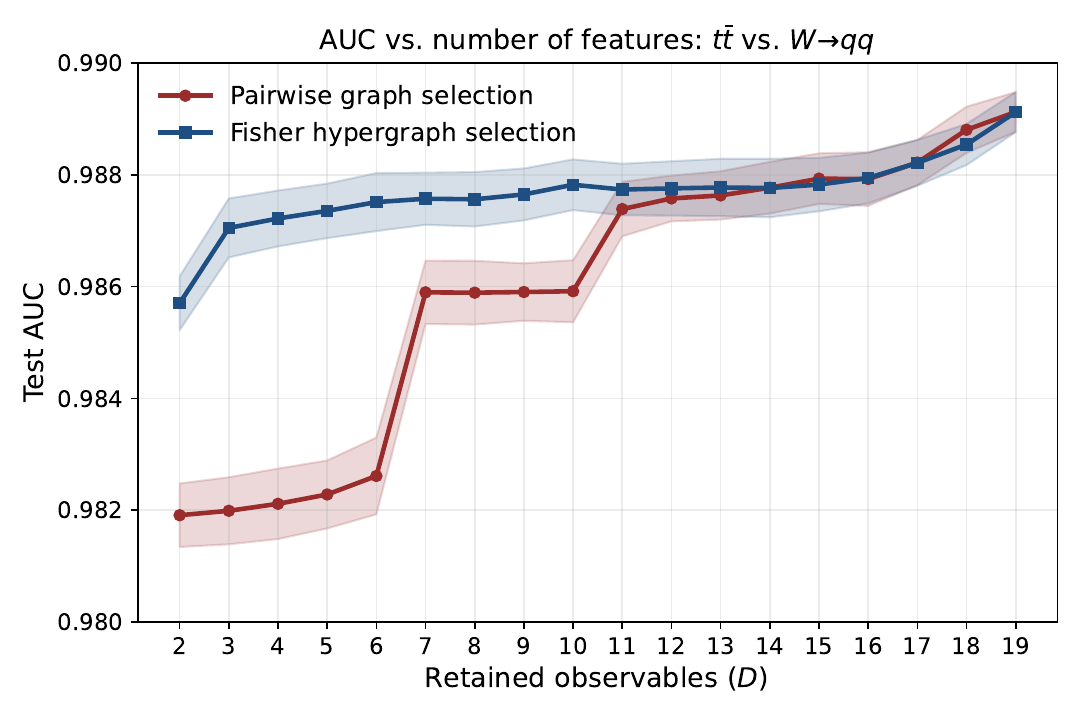}
    \caption{Classification performance for the $t\bar t$ versus $W\to q\bar q$ benchmark under greedy forward selection. The curves show mean test AUC as a function of the number of retained observables, with shaded bands indicating the spread across the cross-validation folds. The Fisher hypergraph selector performs better in the compressed regime, while the two selections coincide once almost the full basis is retained.}
    \label{fig:auc_wtop_study}
\end{figure}

The three legs of the triality are all used explicitly in this benchmark. On the Fisher side, $\order{2}$ and $\order{3}$ define the quadratic and cubic gains, and therefore the expected sensitivity to jet substructure. On the correlator side, those same coefficients are estimated from the EEC/ECF basis as connected fluctuations, so the selected structures can be read as concrete radiation patterns rather than as abstract scores. On the hypergraph side, $\order{2}$ supplies the pairwise baseline while $\order{3}$ supplies the 3-hyperedges whose completion changes the greedy selection. The compression gain, therefore, comes from a prong-sensitive pattern that is irreducibly triplet-valued.

\subsubsection{Interpretable Prong-Sensitive Hyperedges}

The compression gain is not a black-box effect. The right panel of Figure~\ref{fig:selector_scores} shows the direction-aligned node scores together with the leading 3-hyperedges ranked by
\begin{equation}
    \left|u_a u_b u_c \order{3}_{abc}\right|.
\end{equation}
The dominant triplet is $(e_3^{(1)},\,e_3^{(1.5)},\,e_3^{(2)})$, comprising three-point Energy Correlation Functions evaluated at angular exponents $\beta = 1,\,1.5,\,2$, followed by closely related triplets that couple two members of this family to lower-order two-point moments. This pattern carries a clear physical interpretation: the classification task is not driven by any single pairwise energy-flow observable, but by a collective signature of genuine three-pronginess. The $e_3^{(\beta)}$ family is, by construction, sensitive to three-body angular correlations among jet constituents, and its joint variation across multiple $\beta$ values provides a multi-scale characterisation of the three-prong geometry that is irreducible to any combination of two-point observables. Critically, this triplet structure is invisible to selection strategies based solely on the quadratic ($N=2$) Fisher information matrix, whose score functions reward pairwise discriminative power and cannot resolve the collective gain arising from the simultaneous inclusion of correlated three-point observables. It is only when the selection score is defined using higher-order information tensors at order $N > 2$, as in the hypergraph construction, that the greedy selector correctly identifies this triplet as the dominant informative structure, and consequently assembles a compressed basis that is geometrically aligned with the three-prong structure driving the $t \to bq\bar{q}$ versus $W \to q\bar{q}$ classification.

\begin{figure}[t]
    \centering
    \includegraphics[width=0.98\textwidth]{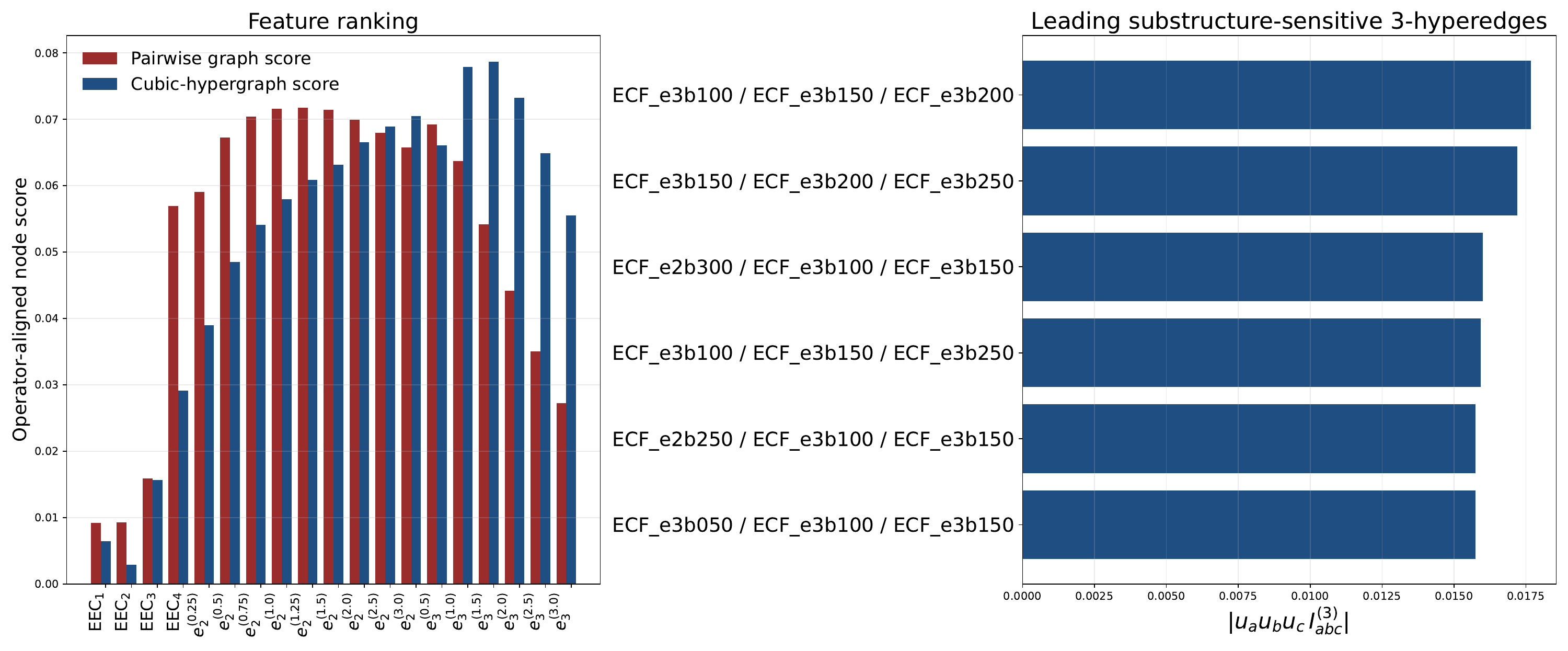}
    \caption{Left: direction-aligned feature scores for the pairwise graph and the multi-order Fisher hypergraph in the benchmark prong-sensitivity study. Right: the leading substructure-sensitive 3-hyperedges ranked by $|u_a u_b u_c\,\order{3}_{abc}|$. The dominant hyperedge ties together the three three-point energy correlators.}
    \label{fig:selector_scores}
\end{figure}

This shows the practical use of the triality. The same tensor simultaneously quantifies local distinguishability, identifies the physically important correlator triplets, and prescribes the higher-order combinatorial structure on which a learning or compression algorithm should operate. While a more realistic analysis with more complex algorithms could be designed, this simple benchmark demonstrates how the full chain would work in practice. The minimal class-mixture study in Section~\ref{sec:toy-application} provides a complementary low-dimensional illustration.

\subsection{Observable Discovery and Higher-Order Basis Design}
\label{sec:basis-application}

A third application is to the design of the observable basis itself. In a realistic jet analysis one often begins with tens or hundreds of binned EECs together with a selected ECF/EFP basis, and the main practical question is not how to fit one fixed set of observables but which observables should survive into the final analysis. This is closely related to information-theoretic approaches to observable design \cite{Larkoski:2014pca}. The triality provides a principled answer because it attaches a local statistical meaning to every connected correlator pattern. Given a task direction $\Delta^a$ in feature space, one may rank hyperedges by the aligned score
\begin{equation}
    S_e^{(n)}
    =
    \left|
    \Delta^{a_1}\cdots \Delta^{a_n}
    \order{n}_{a_1\cdots a_n}
    \right|,
    \qquad
    e=\{a_1,\ldots,a_n\}.
    \label{eq:aligned-score}
\end{equation}
Because the same coefficient is also a connected correlator cumulant, a large value of $S_e^{(n)}$ identifies not merely a large covariance but an irreducible multipoint radiation pattern that matters for the task at hand. This is precisely the information that is lost if one ranks observables by pairwise correlations alone.

To make this concrete we now consider a higher-dimensional benchmark. The reference sample is the boosted top hadronic decay jet (three-pronged) used in the substructure study of Section~\ref{sec:substructure-application}, but the basis-design question is now posed against a \emph{family} of nearby deformations of the reference sample rather than a clear and easily resolvable difference in the jet substructure itself. We use three benchmark deformations with clear collider interpretations: an opening-angle shift of the hard prongs, an inter-prong bridge component and a soft fourth-prong contamination. The candidate observable library contains eighteen binned EECs spanning $0.02<\Delta R\le 0.85$, nine two-point ECFs, and six three-point ECFs,
\begin{equation}
    \big\{\EEC_1,\ldots,\EEC_{18},\ e_2^{(0.25)},\ e_2^{(0.5)},\ldots,e_2^{(3)},\ e_3^{(0.5)},\ldots,e_3^{(3)}\big\}.
\end{equation}
For each deformation $m=1,\ldots,N_\Delta$ with $N_\Delta=3$, let $\Delta_{(m)}^a$ denote the standardised mean-shift direction relative to the reference sample. We construct three benchmark deformations of the reference $t\bar{t}$ distribution
through phase-space selections on per-jet substructure features.
For each jet, we compute three hard prong axes $\{a_1, a_2, a_3\}$ via
exclusive $k_T$ reclustering with $N_\mathrm{exclusive} = 3$, and define
$z_i = p_T^{a_i} / \sum_k p_T^{a_k}$, the pairwise opening angles
$\Delta R_{ij}$, and the N-subjettiness ratio $\tau_4/\tau_3$.
Each deformation $k$ is realised by selecting the subset $S_k$ of events
lying in the upper (or lower) tail fraction $q \in (0,1)$ of a deformation
score $s_k$, as summarised in Table~\ref{tab:deformations}.
The corresponding mean-shift direction in the 33-dimensional observable space
is then
\begin{equation}
  \Delta^{(k)} = \langle \theta \rangle_{S_k} - \langle \theta \rangle_\mathrm{ref},
\end{equation}
where $\langle O \rangle_\mathrm{ref}$ is computed over the remaining background
sample. We use $q=0.1$ as a reasonable choice. 

\begin{table}[htbp]
\centering
\begin{tabular}{lll}
\hline
Deformation & Score $s_k$ & Tail \\
\hline
Opening-angle shift      & $\Delta R_{12}$                                                              & top $q$ \\
Inter-prong bridge       & $\sum_j p_T^j\, \mathbf{1}(R_\mathrm{col} < \Delta R_{j,\mathrm{axis}} < R_\mathrm{wide})$ & top $q$ \\
Soft 4th-prong contamination & $\tau_4/\tau_3$                                                        & bottom $q$ \\
\hline
\end{tabular}
\caption{Deformation scores and tail selections used to construct the
benchmark mean-shift directions. Default radii: $R_\mathrm{wide} = 0.4$,
$R_\mathrm{col} = 0.15$.}
\label{tab:deformations}
\end{table}

A basis-design algorithm should suppress redundancy: once a retained subset already captures a given pairwise fluctuation pattern, adding another observable from the same pairwise ladder should produce only a small marginal gain. We therefore replace static top-$k$ ranking by a greedy forward selection rule. For a retained subset $S$, let $\Delta_{(m),S}$ denote the direction obtained from $\Delta_{(m)}$ by setting the discarded components to zero, and define the retained quadratic Fisher response
\begin{equation}
    Q(S)
    =
    \frac{1}{N_\Delta}
    \sum_{m=1}^{N_\Delta}
    \frac{
    \Delta_{(m),S}^a \order{2}_{ab} \Delta_{(m),S}^b
    }{
    \Delta_{(m)}^a \order{2}_{ab} \Delta_{(m)}^b
    }.
    \label{eq:quadratic-basis-objective}
\end{equation}
Because Eq.~\eqref{eq:quadratic-basis-objective} is evaluated on the current subset rather than on isolated node scores, it penalises redundant pairwise picks automatically. The pairwise graph baseline therefore grows the basis by the marginal-gain rule
\begin{equation}
    a^\star_{\mathrm{graph}}(S)
    =
    \operatorname*{arg\,max}_{a\notin S}
    \Big[
    Q(S\cup\{a\})-Q(S)
    \Big].
    \label{eq:graph-greedy}
\end{equation}
To track genuinely irreducible three-way structure, we complement $Q(S)$ by the aligned triplet coverage
\begin{equation}
    C_3^{\mathrm{align}}(S)
    =
    \frac{
    \sum_{m=1}^{N_\Delta}
    \sum_{\{i,j,k\}\subseteq S}
    \left|
    \Delta_{(m)}^i \Delta_{(m)}^j \Delta_{(m)}^k
    \order{3}_{ijk}
    \right|
    }{
    \sum_{m=1}^{N_\Delta}
    \sum_{i<j<k}
    \left|
    \Delta_{(m)}^i \Delta_{(m)}^j \Delta_{(m)}^k
    \order{3}_{ijk}
    \right|
    }.
    \label{eq:cubic-coverage}
\end{equation}
The Fisher-hypergraph selector uses the same redundancy-suppressing greedy search, but augments the quadratic gain by the aligned cubic coverage,
\begin{equation}
    a^\star_{\mathrm{hyper}}(S)
    =
    \operatorname*{arg\,max}_{a\notin S}
    \Big[
    Q(S\cup\{a\})-Q(S)
    +
    \lambda_3
    \big(
    C_3^{\mathrm{align}}(S\cup\{a\})-C_3^{\mathrm{align}}(S)
    \big)
    \Big],
    \qquad
    \lambda_3=\frac32.
    \label{eq:hyper-greedy}
\end{equation}
The two selectors therefore differ only in whether the cubic triality information is allowed to influence the marginal gain.

The same three ingredients are therefore active throughout this basis-design problem. The Fisher tensors provide the objectives: $Q(S)$ measures retained quadratic response, while $C_3^{\mathrm{align}}(S)$ measures retained rank-three response across the deformation family. Because these same tensors are also connected correlator cumulants, a large cubic marginal gain identifies a genuine multi-observable radiation pattern rather than a redundant refinement of an already covered pairwise ladder. The hypergraph language then turns those same cubic coefficients into triplets that the selector tries to close, which is why the final basis favors wide-angle and three-point structure over extra short-distance pairwise bins.

Because the aim of this subsection is basis design across several nearby tasks, we evaluate each compressed basis using the deformation directions, assessing the final information retention with the corresponding Fisher-aligned natural-parameter direction. For each deformation we measure the retained local third-order divergence fraction

\begin{equation}
    R_{(3)}^{(m)}(S) \;=\; \frac{\tfrac{1}{2}\,\Theta_{(m),S}^{\top} \mathcal{I}^{(2)}_{S}\,\Theta_{(m),S} \;+\; \tfrac{1}{6}\,\mathcal{I}^{(3)}_{S,ijk}\,\Theta_{(m),S}^{i}\Theta_{(m),S}^{j}\Theta_{(m),S}^{k}}{\tfrac{1}{2}\,\Theta^{\top}_{(m)}\mathcal{I}^{(2)}\,\Theta_{(m)} \;+\; \tfrac{1}{6}\,\mathcal{I}^{(3)}_{ijk}\,\Theta^{i}_{(m)}\Theta^{j}_{(m)}\Theta^{k}_{(m)}},
    \label{eq:r3}
\end{equation}
where:
\begin{equation*}
    \Theta_{(m),S} = {\mathcal{I}^{(2)}_{S}}^{-1}\Delta_S,\ \ \Theta_{(m)} = {\mathcal{I}^{(2)}}^{-1}\Delta_{(m)}.
\end{equation*}
 
We summarise this family of tasks by its mean and median,
\begin{equation}
    \overline{R}_{\mathrm{(3)}}(S)
    =
    \frac{1}{N_\Delta}
    \sum_{m=1}^{N_\Delta}
    R_{\mathrm{(3)}}^{(m)}(S),
    \qquad
    R_{\mathrm{(3)}}^{\mathrm{med}}(S)
    =
    \mathrm{median}_{m}\, R_{\mathrm{(3)}}^{(m)}(S).
    \label{eq:r3_metrics}
\end{equation}

Figure~\ref{fig:basis-design} shows the result. Once redundancy is handled at the selection stage, the pairwise graph and Fisher hypergraph no longer collapse onto nearly identical top-$k$ lists. The two selectors coincide only for the first few obvious picks and then separate over the full range $k=6$--$18$. The hypergraph basis improves the exact local diagnostics at second and third order, and the aligned triplet coverage simultaneously throughout that regime. At twelve retained observables, the pairwise graph keeps $\overline{R}_{\mathrm{(3)}}=0.871$, $R_{\mathrm{(3)}}^{\mathrm{med}}=0.899$, and $C_3^{\mathrm{align}}=0.449$, whereas the Fisher hypergraph keeps $0.937$, $0.938$, and $0.490$, respectively. The correct interpretation is therefore that the observed gain is not only a change in ordering: but also that the hypergraph selector retains more of the local Fisher/KL information while also preserving more deformation-aligned cubic structure. This supports using the hypergraph selector as the preferred means of constructing compressed bases for an analysis with a large number of possible observables to choose from, correctly maximising the sensitivity to deformations of the reference sample space.

\begin{figure}[t]
    \centering
    \includegraphics[width=0.98\textwidth]{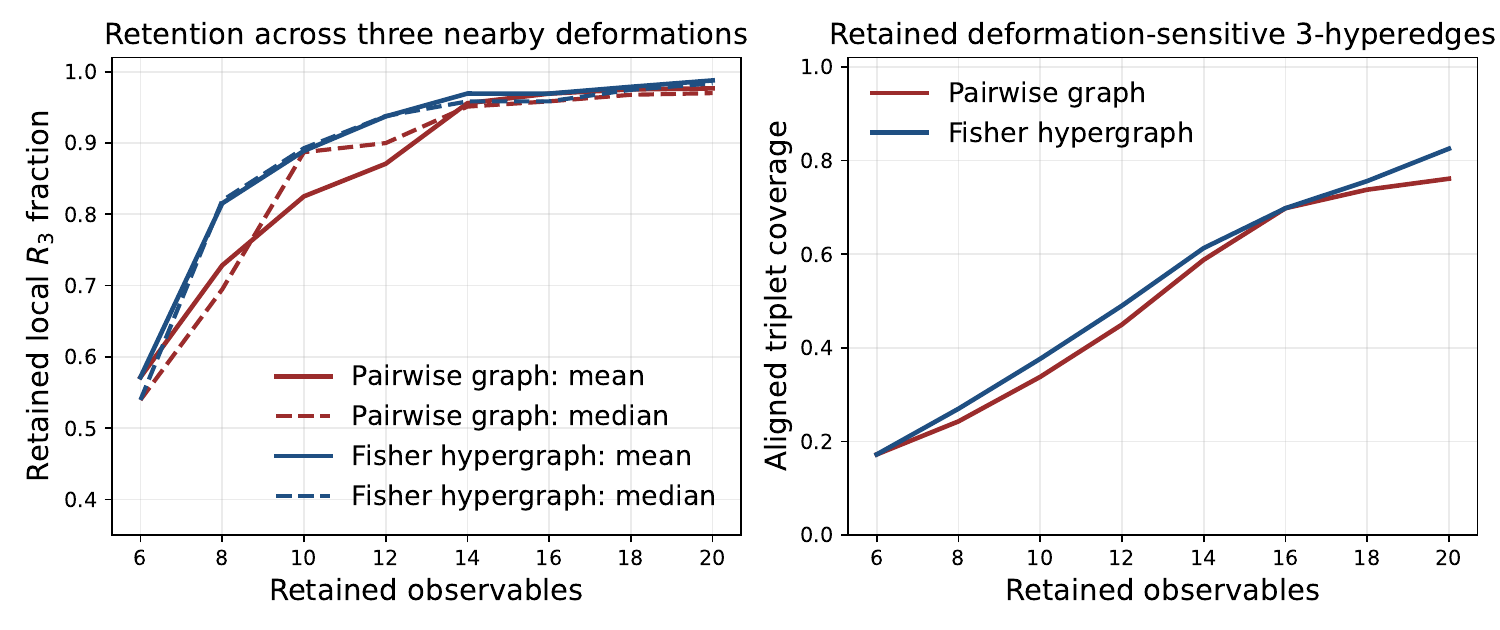}
    \caption{Greedy basis design across three nearby radiation-pattern deformations in a $33$-observable EEC/ECF library. Left: mean and median retained local $R_{(3)}$ fractions, evaluated with Eq.~\eqref{eq:r3}. Right: aligned triplet coverage from Eq.~\eqref{eq:cubic-coverage}. Once redundancy is suppressed by greedy marginal-gain selection, the Fisher hypergraph keeps more exact local discrimination power and visibly more deformation-sensitive rank-three structure over a broad range of feature budgets.}
    \label{fig:basis-design}
\end{figure}

\subsection{Physics-Informed Hypergraph Learning}
\label{sec:learning-application}

The fourth application turns Eq.~\eqref{eq:hgnn} into a concrete low-capacity learning benchmark. Graph-based collider architectures such as ParticleNet and energy-weighted message passing have already shown the effectiveness of physics-aware message passing on jet data \cite{Qu:2019gqs,Konar:2021zdg}. For the present purpose a nearby-deformation task is more informative than the QCD-versus-top separation of Section~\ref{sec:toy-application}, because in the four-feature toy benchmark even simple linear classifiers saturate quickly. We therefore return to the $33$-observable EEC/ECF library and the three-prong reference family of Section~\ref{sec:basis-application}, and classify the reference sample against the soft fourth-prong deformation. This is precisely the regime where the graph and hypergraph priors differ most. In a representative reference training sample the leading pairwise Fisher graph is dominated by the adjacent $e_2$ ladder, whereas the leading 3-hyperedges are $(e_3^{(1.5)},e_3^{(2)},e_3^{(3)})$, $(e_3^{(1)},e_3^{(2)},e_3^{(3)})$, and $(e_3^{(1)},e_3^{(1.5)},e_3^{(3)})$.

For an event represented by the standardised node-feature vector $x\in \mathbb{R}^{33}$, we keep the order-specific propagators fixed and train only a one-layer readout,

\begin{equation}
    h(x)
    =
    \sigma\!\left(
    x U_0 + U_2(\mathcal{P}_2 x)  + U_3(\mathcal{P}_3 x)
    \right),
    \qquad
    s(x)
    =
    \beta^\top \bar h(x),
    \qquad
    \bar h(x)=\frac{1}{|V|}\sum_{a\in V} h_a(x),
    \label{eq:learning-benchmark}
\end{equation}

with the fixed order-$r$ propagation matrices
\begin{equation}
    \mathcal{P}_r
    =
    \left(D_v^{(r)}\right)^{-1/2}
    B^{(r)} |W^{(r)}|
    \left(D_e^{(r)}\right)^{-1}
    \left(B^{(r)}\right)^\top
    \left(D_v^{(r)}\right)^{-1/2}.
\end{equation}
$U_2$ and $U_3$ are simple multi-layer perceptrons (MLPs) with a single hidden layer and about $500$ trainable parameters each. The pairwise graph baseline sets $U_3=0$. The triality-informed hypergraph keeps both orders, with the order-two layer built from the top $150$ entries of $|\order{2}_{ij}|$ and the order-three layer from the top $200$ entries of $|\order{3}_{ijk}|$ estimated on the reference training sample. 

The benchmark uses balanced training samples of $3000$, $6000$, and $9000$ labelled events, together with fixed validation and test sets, and Figure~\ref{fig:learning-summary} reports means and standard deviations over five random splits.

The result is modest in absolute size, because the classifier is intentionally tiny, but it is systematic. At $3000$ labelled training events the Fisher hypergraph reaches a mean test AUC of $0.8163\pm 0.0068$, compared with $0.8102\pm 0.0061$ for the pairwise graph. At $6000$ labelled events the same ordering persists, with $0.8258\pm 0.0029$ for the hypergraph versus $0.8184\pm 0.0039$. The right panel of Figure~\ref{fig:learning-summary} shows the same effect dynamically: at fixed model size the triality-informed hypergraph converges to a higher validation AUC than the graph baseline. Therefore, once the signal depends on a subtle deformation of the radiation pattern rather than on an easy top-vs-QCD split, the Fisher 3-hyperedges become an informative inductive bias rather than a decorative add-on.

This benchmark uses all three legs of the triality in a direct way. The Fisher tensors supply the pairwise and triplet weights that define the two propagation channels. The correlator hierarchy explains why the informative 3-hyperedges sit in the high-$\beta$ $e_3$ sector rather than in another redundant refinement of the pairwise $e_2$ ladder. The hypergraph layer then turns that information-geometric and correlator structure into a concrete learning prior. The gain in Figure~\ref{fig:learning-summary} is therefore not a generic statement that ``hypergraphs are better''; it is the practical benefit of using a hypergraph whose topology and weights have been fixed by the local Fisher--correlator data.

\begin{figure}[t]
    \centering
    \includegraphics[width=\textwidth]{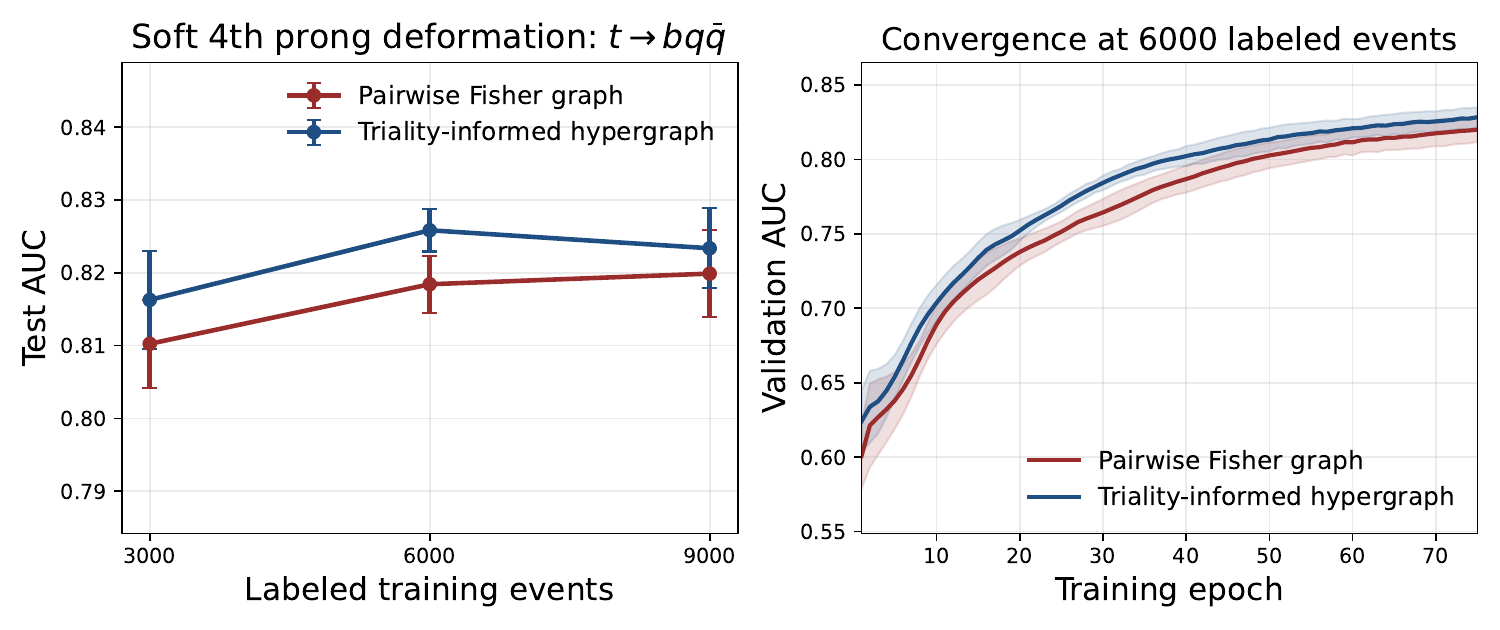}
    \caption{Physics-informed hypergraph learning for the soft fourth-prong deformation in the $33$-observable EEC/ECF library. Left: mean test AUC versus the number of labelled training events for a pairwise Fisher graph and a triality-informed Fisher hypergraph; error bars show one standard deviation over five random splits. Right: mean validation AUC versus epoch at $6000$ labelled training events, with shaded bands showing one standard deviation over the same splits. The faster convergence and higher validation score of the triality-informed hypergraph are consistent with retaining irreducible triplet structure that the graph baseline smears out.}
    \label{fig:learning-summary}
\end{figure}

\section{Summary and Conclusions}
\label{sec:outlook}

We have shown that, once a collider event family is represented in the natural exponential-family coordinates associated with a chosen finite basis of observables, higher Fisher tensors, connected correlator cumulants, and weighted hyperedges are the same local tensor written in three different languages. In this representation the log-partition function generates the full hierarchy, so that the coefficients governing the local Kullback--Leibler expansion also determine the connected multi-observable radiation structure and the weights of the corresponding hyperedges. This provides an exact Fisher--correlator--hypergraph triality for a finite basis of EECs, ECFs, and EFPs.

The scope of this statement is important. The identification is coordinate-specific: it is exact in the canonical chart generated by the chosen sufficient statistics. In a generic parametrisation the local information geometry remains well defined, but from cubic order onwards the KL coefficients contain additional Hessian contributions, as shown in Appendix~\ref{app:generic}, and no longer coincide directly with connected observable cumulants. For applications this does not require the full collider problem to be globally exponential-family. It requires a local embedding around a reference sample in which the chosen observable basis captures the deformations relevant to the task.

The four applications show that the cubic level of this hierarchy is already useful in practice. In the QCD-versus-top local-KL study of Section~\ref{sec:toy-application}, the cubic tensor improves the local KL approximation and isolates a dominant irreducible triplet coupling wide-angle EEC support to the two- and three-point ECFs. In the prong-substructure compression benchmark of Section~\ref{sec:substructure-application}, the hypergraph selector improves BDT classification of $W\to q\bar{q}$ versus $t\to bq\bar{q}$ jets in the compressed regime by elevating higher-order ECFs associated with the leading three-point correlator triplets.

In the three-deformation basis-design study of Section~\ref{sec:basis-application}, the same cubic information improves compressed observable selection. At twelve observables the Fisher hypergraph reaches $\overline{R}_{(3)}=0.937$, $R_{(3)}^{\mathrm{med}}=0.938$, and $C_3^{\mathrm{align}}=0.490$, compared with $0.871$, $0.899$, and $0.449$ for the pairwise graph. In the physics-informed learning benchmark of Section~\ref{sec:learning-application}, a one-layer classifier initialised from the pairwise Fisher graph and Fisher 3-hypergraph reaches a mean test AUC of $0.8258\pm0.0029$ at $6000$ labelled events for the soft fourth-prong deformation, slightly above the pairwise graph baseline of $0.8184\pm 0.0039$ at the $\sim\!1.5\,\sigma$ level over five splits.

Taken together, these results show that the benefit of the triality is not merely that one can attach a hypergraph interpretation to a higher Fisher tensor. The main point is that the same measured or simulated correlator data can carry a common quantitative interpretation across local inference, physics interpretation, basis compression, and learning architecture design. Once a large cubic coefficient is identified, one learns that a specific three-observable radiation pattern is genuinely connected, that it contributes to local distinguishability beyond the Fisher-matrix approximation, and that the corresponding triplet should be retained or propagated jointly in a hypergraph representation. In this way the triality turns higher Fisher tensors from abstract information-geometric objects into quantities that guide which observables to keep, which multi-observable structures to interpret as physically irreducible, and how to initialise higher-order architectures in a way tied directly to the underlying radiation pattern.

Several extensions are natural. For collider analyses, the triality suggests new strategies for analysis-basis construction, feature compression, and the interpretation of wide-angle and multi-prong radiation patterns in both precision and new-physics searches \cite{Nachman:2020ccu,Blance:2019ibf}. For machine learning, it provides a physics-motivated recipe for building higher-order architectures whose topology and weights are initialised from measured or simulated local structure rather than from generic architectural heuristics \cite{Amari:1998,Qu:2019gqs,Konar:2021zdg}. For information geometry, collider observables offer a concrete setting in which higher Fisher tensors have an immediate operational interpretation as local divergence coefficients, connected cumulants, and combinatorial weights. On the algorithmic side, the main open questions are how best to truncate, sparsify, regularise, and propagate the higher-order hierarchy in larger models and larger observable libraries.

\appendix

\section{Generic-Coordinate Form of the Cubic KL Coefficient}
\label{app:generic}

In the main text the triality is formulated in natural exponential-family coordinates, where derivatives of the log-partition function generate both the local KL coefficients and the connected cumulants of the chosen observable basis. Here, we show what changes when one works in a completely generic parametrisation of the same family of probability distributions.

The key point is that the Fisher matrix at quadratic order is special: it can still be written in a clean way in generic coordinates. Starting at cubic order, however, the coefficients of the local KL expansion are no longer determined only by moments of the score. They also contain terms involving second derivatives of the log-likelihood, which measure how the chosen coordinate chart itself bends through the statistical model. The exact equality used in the main text is therefore not a generic coordinate identity, but a property of the natural exponential-family representation.

Let $\ell_\theta(x)=\log p_\theta(x)$ and $s_a=\partial_a \ell_\theta$. Expanding the KL divergence around $\delta=0$ gives
\begin{equation}
    \KL\!\left(p_\theta \Vert p_{\theta+\delta}\right)
    =
    -\sum_{n=1}^{\infty}\frac{1}{n!}\,
    \avg{\partial_{a_1}\cdots\partial_{a_n}\ell_\theta}_\theta\,
    \delta^{a_1}\cdots\delta^{a_n}.
    \label{eq:generic-kl-expand}
\end{equation}
Equation~\eqref{eq:generic-kl-expand} is completely general. It says that the local KL coefficients are expectation values of higher derivatives of the log-likelihood. The first term vanishes because the KL divergence is minimised at $\delta=0$, and the quadratic term reduces to the usual Fisher information matrix after the standard normalisation identities are used. Those identities are
\begin{align}
    0 &= \avg{s_a}_\theta, \\
    0 &= \avg{\partial_{ab}\ell_\theta + s_a s_b}_\theta.
\end{align}
The first relation is simply the statement that the score has zero mean. The second says that the expected Hessian of the log-likelihood is the negative Fisher matrix. Up to this order, the expansion behaves in the familiar way. The first place where the coordinate dependence becomes non-trivial is the cubic coefficient. Differentiating the second identity once more yields
\begin{equation}
    0
    =
    \avg{
    \partial_{abc}\ell_\theta
    + s_a \partial_{bc}\ell_\theta
    + s_b \partial_{ac}\ell_\theta
    + s_c \partial_{ab}\ell_\theta
    + s_a s_b s_c
    }_\theta .
    \label{eq:cubic-identity}
\end{equation}
This identity shows that the third derivative of the log-likelihood is not determined by the score-cubic moment alone, but mixes with products of scores and Hessians. Therefore the cubic KL coefficient in generic coordinates is
\begin{equation}
    \frac{1}{6}
    \avg{
    s_a s_b s_c
    + s_a \partial_{bc}\ell_\theta
    + s_b \partial_{ac}\ell_\theta
    + s_c \partial_{ab}\ell_\theta
    }_\theta
    \delta^a\delta^b\delta^c.
    \label{eq:generic-cubic}
\end{equation}
Equation~\eqref{eq:generic-cubic} reveals that the cubic local KL coefficient generally contains two kinds of information at once. One part is the genuine third-order fluctuation of the score, which is the quantity that becomes the connected observable cumulant in the exponential-family setting. The other part comes from the curvature of the chosen parametrisation, encoded in the Hessian terms. Only in natural exponential coordinates does that second contribution reduce in such a way that the cubic KL coefficient matches the observable cumulant directly. This is why the triality theorem in the main text is stated in the exponential-family embedding rather than for a completely arbitrary parametrisation.

Equivalently, one may say that natural exponential coordinates are the chart in which the local statistical geometry is generated directly by the sufficient statistics. In that chart the cubic coefficient has a clean physical interpretation as a connected three-observable fluctuation and, at the same time, as a hyperedge weight. In a generic chart the same local distinguishability is still present, but the decomposition into physically interpretable cumulants and purely coordinate-dependent terms is mixed. This is the precise sense in which the triality is exact only in the exponential-family representation.

\acknowledgments
The authors acknowledge the support of Schmidt Sciences, the Alexander von Humboldt Foundation, and IPPP Durham, as well as the usage of computing resources at the Institute of Experimental Particle Physics (ETP), KIT.

\bibliographystyle{unsrtnat}
\bibliography{apssamp}

@book{Amari:2000,
  author = {Amari, Shun-ichi and Nagaoka, Hiroshi},
  title = {Methods of Information Geometry},
  publisher = {American Mathematical Society and Oxford University Press},
  year = {2000},
  series = {Translations of Mathematical Monographs},
  volume = {191}
}

@book{Ay:2017,
  author = {Ay, Nihat and Jost, J{\"u}rgen and Le, H{\^o}ng V{\^a}n and Schwachh{\"o}fer, Lorenz},
  title = {Information Geometry},
  series = {Ergebnisse der Mathematik und ihrer Grenzgebiete. 3. Folge},
  volume = {64},
  publisher = {Springer},
  year = {2017},
  doi = {10.1007/978-3-319-56478-4}
}

@book{Bretto:2013,
  author = {Bretto, Alain},
  title = {Hypergraph Theory: An Introduction},
  series = {Mathematical Engineering},
  publisher = {Springer},
  year = {2013},
  doi = {10.1007/978-3-319-00080-0}
}

@article{Eguchi:1985,
  author = {Eguchi, Shinto},
  title = {A differential geometric approach to statistical inference on the basis of contrast functionals},
  journal = {Hiroshima Mathematical Journal},
  volume = {15},
  pages = {341--391},
  year = {1985},
  doi = {10.32917/hmj/1206130775}
}

@article{Eguchi:1992,
  author = {Eguchi, Shinto},
  title = {Geometry of minimum contrast},
  journal = {Hiroshima Mathematical Journal},
  volume = {22},
  number = {3},
  pages = {631--647},
  year = {1992},
  doi = {10.32917/hmj/1206128508}
}

@article{Basham:1978zq,
    author = "Basham, C. L. and Brown, L. S. and Ellis, S. D. and Love, S. T.",
    title = "{Energy Correlations in electron-Positron Annihilation in Quantum Chromodynamics: Asymptotically Free Perturbation Theory}",
    reportNumber = "RLO-1388-761",
    doi = "10.1103/PhysRevD.19.2018",
    journal = "Phys. Rev. D",
    volume = "19",
    pages = "2018",
    year = "1979"
}

@article{Larkoski:2013eya,
    author = "Larkoski, Andrew J. and Salam, Gavin P. and Thaler, Jesse",
    title = "{Energy Correlation Functions for Jet Substructure}",
    eprint = "1305.0007",
    archivePrefix = "arXiv",
    primaryClass = "hep-ph",
    reportNumber = "MIT-CTP-4446, CERN-PH-TH-2013-066, LPN13-026",
    doi = "10.1007/JHEP06(2013)108",
    journal = "JHEP",
    volume = "06",
    pages = "108",
    year = "2013"
}

@article{Komiske:2017aww,
    author = "Komiske, Patrick T. and Metodiev, Eric M. and Thaler, Jesse",
    title = "{Energy flow polynomials: A complete linear basis for jet substructure}",
    eprint = "1712.07124",
    archivePrefix = "arXiv",
    primaryClass = "hep-ph",
    reportNumber = "MIT-CTP-4965",
    doi = "10.1007/JHEP04(2018)013",
    journal = "JHEP",
    volume = "04",
    pages = "013",
    year = "2018"
}

@article{Hofman:2008ar,
    author = "Hofman, Diego M. and Maldacena, Juan",
    title = "{Conformal collider physics: Energy and charge correlations}",
    eprint = "0803.1467",
    archivePrefix = "arXiv",
    primaryClass = "hep-th",
    doi = "10.1088/1126-6708/2008/05/012",
    journal = "JHEP",
    volume = "05",
    pages = "012",
    year = "2008"
}

@article{Moult:2018jzp,
    author = "Moult, Ian and Zhu, Hua Xing",
    title = "{Simplicity from Recoil: The Three-Loop Soft Function and Factorization for the Energy-Energy Correlation}",
    eprint = "1801.02627",
    archivePrefix = "arXiv",
    primaryClass = "hep-ph",
    doi = "10.1007/JHEP08(2018)160",
    journal = "JHEP",
    volume = "08",
    pages = "160",
    year = "2018"
}

@inproceedings{Zhou:2006,
  author = {Zhou, Dengyong and Huang, Jiayuan and Sch{\"o}lkopf, Bernhard},
  title = {Learning with Hypergraphs: Clustering, Classification, and Embedding},
  booktitle = {Advances in Neural Information Processing Systems 19},
  year = {2007}
}

@article{DBLP:journals/corr/abs-1809-09401,
  author       = {Yifan Feng and
                  Haoxuan You and
                  Zizhao Zhang and
                  Rongrong Ji and
                  Yue Gao},
  title        = {Hypergraph Neural Networks},
  journal      = {CoRR},
  volume       = {abs/1809.09401},
  year         = {2018},
  url          = {http://arxiv.org/abs/1809.09401},
  eprinttype   = {arXiv},
  eprint       = {1809.09401},
  bibsource    = {dblp computer science bibliography, https://dblp.org}
}

@article{Birch-Sykes:2024gij,
    author = "Birch-Sykes, Callum and Le, Brian and Peters, Yvonne and Simpson, Ethan and Zhang, Zihan",
    title = "{Reconstructing short-lived particles using hypergraph representation learning}",
    eprint = "2402.10149",
    archivePrefix = "arXiv",
    primaryClass = "hep-ph",
    doi = "10.1103/PhysRevD.111.032004",
    journal = "Phys. Rev. D",
    volume = "111",
    number = "3",
    pages = "032004",
    year = "2025"
}

@article{Konar:2023ptv,
    author = "Konar, Partha and Ngairangbam, Vishal S. and Spannowsky, Michael",
    title = "{Hypergraphs in LHC phenomenology {\textemdash} the next frontier of IRC-safe feature extraction}",
    eprint = "2309.17351",
    archivePrefix = "arXiv",
    primaryClass = "hep-ph",
    reportNumber = "IPPP/23/53",
    doi = "10.1007/JHEP01(2024)113",
    journal = "JHEP",
    volume = "01",
    pages = "113",
    year = "2024"
}

@article{Rao:1945,
  author = {Rao, C. Radhakrishna},
  title = {Information and the Accuracy Attainable in the Estimation of Statistical Parameters},
  journal = {Bull. Calcutta Math. Soc.},
  volume = {37},
  pages = {81--91},
  year = {1945}
}

@book{Chentsov:1982,
  author = {Cencov, N. N.},
  title = {Statistical Decision Rules and Optimal Inference},
  publisher = {American Mathematical Society},
  series = {Translations of Mathematical Monographs},
  volume = {53},
  address = {Providence, RI},
  year = {1982},
  note = {English translation of the 1972 Russian original}
}

@book{Amari:1985,
  author = {Amari, Shun-ichi},
  title = {Differential-Geometrical Methods in Statistics},
  series = {Lecture Notes in Statistics},
  volume = {28},
  publisher = {Springer},
  year = {1985},
  doi = {10.1007/978-1-4612-5056-2}
}

@book{BarndorffNielsen:1978,
  author = {Barndorff-Nielsen, Ole E.},
  title = {Information and Exponential Families in Statistical Theory},
  publisher = {Wiley},
  series = {Wiley Series in Probability and Mathematical Statistics},
  year = {1978}
}

@book{McCullagh:1987,
  author = {McCullagh, Peter},
  title = {Tensor Methods in Statistics},
  publisher = {Chapman and Hall},
  year = {1987}
}

@article{Dixon:2019uzg,
    author = "Dixon, Lance J. and Moult, Ian and Zhu, Hua Xing",
    title = "{Collinear limit of the energy-energy correlator}",
    eprint = "1905.01310",
    archivePrefix = "arXiv",
    primaryClass = "hep-ph",
    reportNumber = "SLAC-PUB-17427",
    doi = "10.1103/PhysRevD.100.014009",
    journal = "Phys. Rev. D",
    volume = "100",
    number = "1",
    pages = "014009",
    year = "2019"
}

@article{Chen:2019bpb,
    author = "Chen, Hao and Luo, Ming-Xing and Moult, Ian and Yang, Tong-Zhi and Zhang, Xiaoyuan and Zhu, Hua Xing",
    title = "{Three point energy correlators in the collinear limit: symmetries, dualities and analytic results}",
    eprint = "1912.11050",
    archivePrefix = "arXiv",
    primaryClass = "hep-ph",
    doi = "10.1007/JHEP08(2020)028",
    journal = "JHEP",
    volume = "08",
    number = "08",
    pages = "028",
    year = "2020"
}

@article{Larkoski:2017jix,
    author = "Larkoski, Andrew J. and Moult, Ian and Nachman, Benjamin",
    title = "{Jet Substructure at the Large Hadron Collider: A Review of Recent Advances in Theory and Machine Learning}",
    eprint = "1709.04464",
    archivePrefix = "arXiv",
    primaryClass = "hep-ph",
    doi = "10.1016/j.physrep.2019.11.001",
    journal = "Phys. Rept.",
    volume = "841",
    pages = "1--63",
    year = "2020"
}

@article{Thaler:2010tr,
    author = "Thaler, Jesse and Van Tilburg, Ken",
    title = "{Identifying Boosted Objects with N-subjettiness}",
    eprint = "1011.2268",
    archivePrefix = "arXiv",
    primaryClass = "hep-ph",
    reportNumber = "MIT-CTP-4191",
    doi = "10.1007/JHEP03(2011)015",
    journal = "JHEP",
    volume = "03",
    pages = "015",
    year = "2011"
}

@article{Larkoski:2014pca,
    author = "Larkoski, Andrew J. and Thaler, Jesse and Waalewijn, Wouter J.",
    title = "{Gaining (Mutual) Information about Quark/Gluon Discrimination}",
    eprint = "1408.3122",
    archivePrefix = "arXiv",
    primaryClass = "hep-ph",
    reportNumber = "MIT--CTP-4572, NIKHEF-2014-026",
    doi = "10.1007/JHEP11(2014)129",
    journal = "JHEP",
    volume = "11",
    pages = "129",
    year = "2014"
}

@book{Berge:1989,
  author = {Berge, Claude},
  title = {Hypergraphs: Combinatorics of Finite Sets},
  series = {North-Holland Mathematical Library},
  volume = {45},
  publisher = {North-Holland},
  year = {1989}
}

@article{Qu:2019gqs,
    author = "Qu, Huilin and Gouskos, Loukas",
    title = "{ParticleNet: Jet Tagging via Particle Clouds}",
    eprint = "1902.08570",
    archivePrefix = "arXiv",
    primaryClass = "hep-ph",
    doi = "10.1103/PhysRevD.101.056019",
    journal = "Phys. Rev. D",
    volume = "101",
    number = "5",
    pages = "056019",
    year = "2020"
}

@article{Amari:1998,
  author = {Amari, Shun-ichi},
  title = {Natural Gradient Works Efficiently in Learning},
  journal = {Neural Computation},
  volume = {10},
  number = {2},
  pages = {251--276},
  year = {1998},
  doi = {10.1162/089976698300017746}
}

@article{Komiske:2019fks,
    author = "Komiske, Patrick T. and Metodiev, Eric M. and Thaler, Jesse",
    title = "{Metric Space of Collider Events}",
    eprint = "1902.02346",
    archivePrefix = "arXiv",
    primaryClass = "hep-ph",
    reportNumber = "MIT-CTP 5102",
    doi = "10.1103/PhysRevLett.123.041801",
    journal = "Phys. Rev. Lett.",
    volume = "123",
    number = "4",
    pages = "041801",
    year = "2019"
}

@article{Nachman:2020ccu,
    author = "Nachman, Benjamin",
    title = "{Anomaly Detection for Physics Analysis and Less than Supervised Learning}",
    eprint = "2010.14554",
    archivePrefix = "arXiv",
    primaryClass = "hep-ph",
    month = "10",
    year = "2020"
}

@article{Blance:2019ibf,
  author = {Blance, Andrew and Spannowsky, Michael and Waite, Philip},
  title = {Adversarially-trained autoencoders for robust unsupervised new physics searches},
  journal = {JHEP},
  volume = {10},
  pages = {047},
  year = {2019},
  eprint = {1905.10384},
  archivePrefix = {arXiv},
  primaryClass = {hep-ph},
  reportNumber = {IPPP/19/41},
  doi = {10.1007/JHEP10(2019)047}
}

@article{Soper:2011cr,
  author = {Soper, Davison E. and Spannowsky, Michael},
  title = {Finding physics signals with shower deconstruction},
  journal = {Phys. Rev. D},
  volume = {84},
  pages = {074002},
  year = {2011},
  eprint = {1102.3480},
  archivePrefix = {arXiv},
  primaryClass = {hep-ph},
  doi = {10.1103/PhysRevD.84.074002}
}

@article{Konar:2021zdg,
  author = {Konar, Partha and Ngairangbam, Vishal S. and Spannowsky, Michael},
  title = {Energy-weighted message passing: an infra-red and collinear safe graph neural network algorithm},
  journal = {JHEP},
  volume = {02},
  pages = {060},
  year = {2022},
  eprint = {2109.14636},
  archivePrefix = {arXiv},
  primaryClass = {hep-ph},
  reportNumber = {IPPP/21/33},
  doi = {10.1007/JHEP02(2022)060}
}

@article{Alwall:2014hca,
    author         = "Alwall, J. and Frederix, R. and Frixione, S. and Hirschi, V.
                      and Maltoni, F. and Mattelaer, O. and Shao, H.-S. and Stelzer, T.
                      and Torrielli, P. and Zaro, M.",
    title          = "{The automated computation of tree-level and
                      next-to-leading order differential cross sections,
                      and their matching to parton shower simulations}",
    journal        = "JHEP",
    volume         = "07",
    year           = "2014",
    pages          = "079",
    doi            = "10.1007/JHEP07(2014)079",
    eprint         = "1405.0301",
    archivePrefix  = "arXiv",
    primaryClass   = "hep-ph"
}

@article{Ball:2017nwa,
    author         = "Ball, Richard D. and others",
    collaboration  = "NNPDF",
    title          = "{Parton distributions from high-precision collider data}",
    journal        = "Eur. Phys. J. C",
    volume         = "77",
    year           = "2017",
    number         = "10",
    pages          = "663",
    doi            = "10.1140/epjc/s10052-017-5199-5",
    eprint         = "1706.00428",
    archivePrefix  = "arXiv",
    primaryClass   = "hep-ph"
}

@article{Bertone:2017bme,
    author         = "Bertone, Valerio and Carrazza, Stefano and Hartland, Nathan P.
                      and Rojo, Juan",
    collaboration  = "NNPDF",
    title          = "{Illuminating the photon content of the proton within a
                      global PDF analysis}",
    journal        = "SciPost Phys.",
    volume         = "5",
    year           = "2018",
    number         = "1",
    pages          = "008",
    doi            = "10.21468/SciPostPhys.5.1.008",
    eprint         = "1712.07053",
    archivePrefix  = "arXiv",
    primaryClass   = "hep-ph"
}

@article{Sjostrand:2014zea,
    author         = "Sj{\"o}strand, Torbj{\"o}rn and Ask, Stefan and Christiansen,
                      Jesper R. and Corke, Richard and Desai, Nishita and Ilten, Philip
                      and Mrenna, Stephen and Prestel, Stefan and Rasmussen, Christine O.
                      and Skands, Peter Z.",
    title          = "{An introduction to \textsc{Pythia} 8.2}",
    journal        = "Comput. Phys. Commun.",
    volume         = "191",
    year           = "2015",
    pages          = "159--177",
    doi            = "10.1016/j.cpc.2015.01.024",
    eprint         = "1410.3012",
    archivePrefix  = "arXiv",
    primaryClass   = "hep-ph"
}

@article{CMS:2019csb,
    author         = "Sirunyan, Albert M and others",
    collaboration  = "CMS",
    title          = "{Extraction and validation of a new set of CMS
                      \textsc{Pythia8} tunes from underlying-event measurements}",
    journal        = "Eur. Phys. J. C",
    volume         = "80",
    year           = "2020",
    number         = "1",
    pages          = "4",
    doi            = "10.1140/epjc/s10052-019-7499-4",
    eprint         = "1903.12179",
    archivePrefix  = "arXiv",
    primaryClass   = "hep-ex"
}

@article{deFavereau:2013fsa,
    author         = "de Favereau, J. and Delaere, C. and Demin, P. and Giammanco, A.
                      and Lema{\^i}tre, V. and Mertens, A. and Selvaggi, M.",
    collaboration  = "DELPHES 3",
    title          = "{\textsc{Delphes} 3: a modular framework for fast simulation
                      of a generic collider experiment}",
    journal        = "JHEP",
    volume         = "02",
    year           = "2014",
    pages          = "057",
    doi            = "10.1007/JHEP02(2014)057",
    eprint         = "1307.6346",
    archivePrefix  = "arXiv",
    primaryClass   = "hep-ex"
}

@article{Butterworth:2008iy,
    author        = "Butterworth, Jonathan M. and Davison, Adam R. and Rubin, Mathieu and Salam, Gavin P.",
    title         = "{Jet substructure as a new Higgs search channel at the LHC}",
    eprint        = "0802.2470",
    archivePrefix = "arXiv",
    primaryClass  = "hep-ph",
    doi           = "10.1103/PhysRevLett.100.242001",
    journal       = "Phys. Rev. Lett.",
    volume        = "100",
    pages         = "242001",
    year          = "2008"
}

@article{Kaplan:2008ie,
    author        = "Kaplan, David E. and Rehermann, Keith and Schwartz, Matthew D. and Tweedie, Brock",
    title         = "{Top Tagging: A Method for Identifying Boosted Hadronically Decaying Top Quarks}",
    eprint        = "0806.0848",
    archivePrefix = "arXiv",
    primaryClass  = "hep-ph",
    doi           = "10.1103/PhysRevLett.101.142001",
    journal       = "Phys. Rev. Lett.",
    volume        = "101",
    pages         = "142001",
    year          = "2008"
}

@article{Kogler:2018hem,
    author        = "Kogler, Roman and others",
    title         = "{Jet Substructure at the Large Hadron Collider: Experimental Review}",
    eprint        = "1803.06991",
    archivePrefix = "arXiv",
    primaryClass  = "hep-ex",
    doi           = "10.1103/RevModPhys.91.045003",
    journal       = "Rev. Mod. Phys.",
    volume        = "91",
    number        = "4",
    pages         = "045003",
    year          = "2019"
}

@article{Kasieczka:2019dbj,
    author        = "Kasieczka, Gregor and others",
    title         = "{The Machine Learning landscape of top taggers}",
    eprint        = "1902.09914",
    archivePrefix = "arXiv",
    primaryClass  = "hep-ph",
    doi           = "10.21468/SciPostPhys.7.1.014",
    journal       = "SciPost Phys.",
    volume        = "7",
    pages         = "014",
    year          = "2019"
}

@article{Cacciari:2008gp,
    author = "Cacciari, Matteo and Salam, Gavin P. and Soyez, Gregory",
    title = "{The anti-$k_t$ jet clustering algorithm}",
    eprint = "0802.1189",
    archivePrefix = "arXiv",
    primaryClass = "hep-ph",
    reportNumber = "LPTHE-07-03",
    doi = "10.1088/1126-6708/2008/04/063",
    journal = "JHEP",
    volume = "04",
    pages = "063",
    year = "2008"
}

\end{document}